\documentclass[twocolumn]{autart}    % Enable this line and disable the 
\usepackage{graphicx}          % Include this line if your 
\usepackage{xcolor}
\usepackage{amsfonts}
\usepackage{amsmath}
\usepackage{amssymb}
\usepackage{mathrsfs}
\usepackage{float}
\usepackage{bm}
\usepackage{subcaption}
\usepackage{url}
\usepackage{diagbox}
\usepackage{mathtools}
\usepackage{cite}[sort,compress]

\usepackage{nicematrix}
\usepackage{arydshln}
\usepackage{multirow}

\usepackage{algorithm}
\usepackage{algpseudocode}

\begin{document}

\begin{frontmatter}
%\runtitle{Insert a suggested running title}  % Running title for regular 
                                              % papers but only if the title  
                                              % is over 5 words. Running title 
                                              % is not shown in output.

\title{A subspace approach to data-driven predictive control for linear parameter-varying systems\thanksref{footnoteinfo}} % Title, preferably not more 
                                                % than 10 words.

\thanks[footnoteinfo]{Corresponding author Federico Porcari.}

\author[Polimi]{Federico Porcari}\ead{federico.porcari@polimi.it},
\author[UPenn,TUe]{Chris Verhoek}, %\ead{c.verhoek@tue.nl},
\author[TUe]{Valentina Breschi}, %\ead{v.breschi@tue.nl},
\author[TUe,SZTAKI]{Roland T\'oth}, %\ead{r.toth@tue.nl},
\author[Polimi]{Simone Formentin}%\ead{simone.formentin@polimi.it}

\address[Polimi]{Dipartimento di Elettronica, Informazione e Bioingegneria, Politecnico di Milano, P.za L. Da Vinci, 32, 20133 Milano, Italy.}  
\address[TUe]{Control Systems group, Dept. of Electrical Engineering, Eindhoven University of Technology, The Netherlands.}
\address[SZTAKI]{Systems and Control Lab, HUN-REN Institute for Computer Science and Control, 1111 Budapest, Hungary}
\address[UPenn]{Department of Electrical and Systems Engineering, University of Pennsylvania, United States}

\begin{keyword}                           % Five to ten keywords,  
Data-based control, Linear parameterically varying (LPV) methodologies, Control of constrained systems, Model selection                    % chosen from the IFAC 
\end{keyword}                             % keyword list or with the 
                                          % help of the Automatica 
                                          % keyword wizard

\begin{abstract}                          % Abstract of not more than 
This paper presents a subspace data-driven predictive control method for linear parameter-varying (LPV) systems. Starting from an affine LPV state-space model in innovation form, we derive a multi-step predictor that separates the effects of past data, future inputs, scheduling trajectories, and innovations. By projecting this representation onto the row span of lifted input-output-scheduling data, we obtain an asymptotically unbiased data-driven predictor that can be embedded directly in a receding-horizon control problem, without explicitly identifying an LPV model. To make the resulting LPV data-driven predictive control (DDPC) formulation tractable, we introduce an LPV extension of $\gamma$-DDPC based on an LQ factorization. This formulation fixes the number of online decision variables independently of the length of the dataset. A reduced-order predictor is then proposed to curb the exponential growth of scheduling-dependent regressors, which also relaxes the persistence-of-excitation condition. Simulation studies, including an unbalanced-disk example, show that the proposed controller achieves good tracking performance and, compared to existing LPV DDPC schemes, achieves better robustness to measurement noise and reduced computational cost, making multi-step LPV DDPC practically deployable, even with longer past horizons.
\end{abstract}

\end{frontmatter}

\section{Introduction}

Linear parameter-varying (LPV) models provide a systematic way to represent nonlinear and/or time-varying dynamics through linear representations whose parameters depend on a measurable scheduling signal. In this way, the LPV framework bridges linear and nonlinear control, retaining much of the modeling, analysis, and synthesis machinery of the linear time-invariant (LTI) setting, while covering wide operating envelopes \cite{Cox:21a,Petreczky:17a,Toth:10a}. In parallel, the growing availability of informative process data has fueled the development of predictive control strategies that can be designed directly from measured trajectories, postponing or altogether avoiding an explicit model identification step whenever sufficiently informative data are available \cite{Coulson:19a,Verheijen:23a,Willems:05a}. Within this line of research, DeePC and related data-driven predictive control (DDPC) schemes have established a powerful paradigm: future trajectories are predicted directly from data and embedded into a receding-horizon optimal control problem \cite{Coulson:19a}. However, more recent results have shown that noise handling and computational scalability are key challenges in direct DDPC \cite{Berberich:20a,Breschi:23a,Favoreel:99a,Fiedler:21a,Mattsson:24a}.

Extending DDPC ideas from LTI systems to LPV systems is appealing, as LPV embeddings provide a structured route to data-driven control of nonlinear plants. At the same time, LPV data-driven representations must account for the interaction between inputs, outputs, and scheduling trajectories. This leads to lifted predictors whose dimensions, and thus the amount of data required to define them, grow rapidly with the scheduling dimension and with the past and prediction horizons. Recent work has begun to address the LPV DDPC problem from different directions. Direct LPV data-driven predictive control methods have been proposed on LPV behavioral arguments and LPV variants of Willems' Fundamental Lemma \cite{verhoek2021fundamental,Verhoek:25a,Verhoek:25b}, whereas alternative formulations based on polytopic LPV embeddings~\cite{BouHamdan:24a} have shown that LPV DDPC can also be realized without explicit forecasts of the future scheduling trajectory. In parallel, LPV subspace identification has developed several tools for estimating innovation-form predictors from data \cite{Cox:21a,vanWingerden:09a}, which can then be used inside indirect predictive controllers, i.e., that rely on an identified model \cite{Dong:09a}. These works strongly suggest that a fruitful route for LPV DDPC is to revisit direct data-driven prediction through the stochastic lens of LPV subspace identification, as purely deterministic data-driven predictors are typically fragile against noise. At the same time, such a route is viable only if the computational burden induced by LPV representations is explicitly addressed.

The present paper follows this route by formulating a computationally efficient, subspace direct LPV DDPC method that, unlike existing direct LPV DDPC approaches, explicitly accounts for noise. We consider data-generating systems representable as affine LPV state-space models in innovation form and derive a multi-step output predictor that explicitly separates the contribution of past measurements, future inputs, scheduling trajectories, and innovation noise. Projecting this predictor onto the row span generated by past data and future inputs yields an asymptotically unbiased predictor, with a similar rationale as LPV subspace identification, but without requiring an explicit model-estimation stage. 

This data-driven construction, however, presents computational limitations due to its LPV structure. Subspace predictors benefit from sufficiently long past horizons, as these horizons attenuate the effect of unknown initial conditions and improve prediction quality in noisy settings \cite{Breschi:23c,VanOverschee:96a}. In the LPV setting, increasing the past horizon also enlarges the lifted scheduling-dependent regressors and, consequently, the required data matrices and online optimization problem. This explains why existing LPV DDPC formulations often resort to short past horizons to remain computationally manageable \cite{Verhoek:25a}. While this choice is practical, it may limit noise rejection and prediction quality. 

Building on the $\gamma$-DDPC framework of \cite{Breschi:23a} and on the LPV data-driven control perspective of \cite{Verhoek:25a}, this paper develops a computationally efficient, subspace-inspired \emph{direct} data-driven predictive control method for LPV systems. The resulting controller does not rely on explicit identification of an LPV model and explicitly accounts for noise in the data. The main contributions are as follows

% Instead, it combines: \textit{i}) a stochastic LPV multi-step predictor derived from lifted data equations; \textit{ii}) an LQ-based reformulation that fixes the size of the online optimization variables independently of the dataset length; and \textit{iii}) a reduced-order approximation of the LPV predictor obtained by exploiting the contractive effect of normalized scheduling products together with a relevance-based selection of predictor rows. The computational ingredients are therefore not ancillary to the proposed method: the LQ reformulation removes the dependence of the online decision space on the number of collected samples, while the reduced-order construction efficiently reduces the constraint dimension. In this way, the proposed formulation directly accounts for the presence of noise in the data, while making LPV DDPC computationally viable for horizons and data lengths of practical interest.

% \paragraph*{Contributions.} The main contributions are as follows:
\begin{itemize}
    \item We derive a stochastic, subspace-inspired multi-step data-driven predictor for affine LPV systems in innovation form, establish its asymptotically unbiased approximation, and provide an explicit bound on the prediction error induced by relaxing exact Kronecker consistency of the lifted future input.
    \item We formulate an LPV $\gamma$-DDPC problem based on an LQ factorization of the data matrices. This reformulation fixes the dimension of the online decision variables independently of the number of collected samples.
    \item We propose a two-stage complexity-reduction strategy for LPV DDPC: \textit{i}) an \emph{a priori} pruning of high-order scheduling monomials after normalization of the scheduling signal; \textit{ii}) a relevance-based row-selection algorithm that retains only the most informative rows of the past and future data matrices. The resulting predictor is a reduced-order approximation of the full LPV predictor, enabling longer past horizons than those typically affordable in LPV DDPC formulations, while keeping the online problem dimension prescribed by user-selected complexity parameters.
    \item Through simulation studies, we illustrate improved robustness to measurement noise and a more favorable complexity/performance trade-off than the considered LPV-IO-DPC baseline \cite{Verhoek:25a}, while remaining close to an oracle LPV MPC benchmark.
\end{itemize}

\paragraph*{Outline.} The remainder of the paper is organized as follows. Section~\ref{sec:setting_goal} introduces the considered LPV data-driven control problem. Section~\ref{sec:data_to_predictions} derives the subspace-inspired predictor and its asymptotically unbiased data-driven approximation. Section~\ref{sec:limitations} formulates the corresponding LPV-DeePC scheme and discusses the structural and complexity limitations of the lifted LPV predictor, introducing the LQ-based compression used for an initial complexity reduction and the proposed LPV $\gamma$-DDPC formulation. Section~\ref{sec:intractability} introduces the tractable reduced form of the proposed LPV $\gamma$-DDPC and discusses how the proposed reductions make the approach computationally deployable. Section~\ref{sec:reduced_gamma} compares the proposed method with related LPV data-driven and subspace predictive control approaches. Sections~\ref{sec:hyperparameter_analysis} and~\ref{sec:unbalanced} demonstrate the efficiency and performance of the proposed approach through simulation studies, while Section~\ref{sec:conclusions} concludes the paper.

%\begin{itemize}
 %\item Motivation
 %\item Literature overview
 %\item Identify open challenges
 %\item Objective of the paper and sketch of proposed solution
 %\item Bullet-point-wise list of contributions
 %\item Short overview of content
%\end{itemize}

%\paragraph*{Contribution.} Building on the data-driven predictive control scheme recently proposed in \cite{Breschi:23a}, as well as on the rationale of \cite{Verhoek:25a}, we propose an alternative formulation for \emph{direct} data-driven predictive control for LPV systems. Our approach relies on a subspace-inspired reformulation of the data-driven predictor that allows us to %explicitly account for the properties of (process and measurement) noise. \textcolor{red}{More details to be added} 
%To enhance the approach's computational efficiency, we propose a complexity-reduction scheme that \textcolor{red}{More details to be added}. Unlike existing data-driven predictive control schemes for LPV systems, which scale with the number of available data, this design approach allows us to limit \textcolor{red}{how is it reduced should be added} the dimensions of the optimization problem to be solved at each time instant. As shown in our numerical experiments, our reformulation leads to improved noise rejection capailities in closed-loop with respect to the state-of-the-art and \textcolor{red}{More details to be added after reviewing the content.}  

%\paragraph*{Outline.}

\paragraph*{Notation.} The set of natural numbers excluding zero is denoted as $\mathbb{N}$, while $\mathbb{N}_0$ indicates the set of natural numbers including zero. Moreover, the sets of integer and real numbers are indicated as $\mathbb{Z}$ and $\mathbb{R}$, respectively. $I_n$ defines the identity matrix of size $n \times n$. Given a matrix $A \in \mathbb{R}^{n \times m}$, $A^\top$ denotes its transpose, while $\lVert A \rVert_2$ and $\lVert A \rVert_F$ are its 2-norm and Frobenius norm, respectively. Moreover, $A[i,:]$ indicates the $i$-th row of $A$, $A[:,i]$ denotes its $i$-th column, and $A[i:j,k:l]$ is its submatrix formed by the rows of $A$ from the $i$-th to the $j$-th, and its columns, from the $k$-th and $l$-th (with $i<j$ and $k<l$). Given another matrix $B \in \mathbb{R}^{p \times q}$, $A \otimes B \in \mathbb{R}^{np \times mq}$ is the Kronecker product between $A$ and $B$. The orthogonal projection of a matrix $A$ on the row span of a matrix $B$ is defined as $\Pi_B(A) = A B^\top (B B ^\top)^{-1} B$. For a vector $z \in \mathbb{R}^n$, $z^{[i]}$ denotes its $i$-th element while, given $A \in \mathbb{R}^{n\times n}$, $\lVert z \rVert_A=(z^\top A z)^{1/2}$. Given a signal $z_k \in \mathbb{R}^{n_\mathrm{z}}$, with $k \in \mathbb{Z}$, we compactly denote 
\begin{equation}
  z_{[k_1, k_2]}=\begin{bmatrix}
    z_{k_1}^{\top} & z_{k_1+1}^{\top} & \cdots & z_{k_2}^{\top}
\end{bmatrix}^{\top}\!,  
\end{equation}
while the Hankel matrix of width $N \in \mathbb{N}$ and depth $T \in \mathbb{N}$ associated with $z_{[k,k+T+N-2]}$ is defined as
\begin{equation}\label{eq:Hankel_def}
    Z_{k, T, N} = \frac{1}{\sqrt{N}} \begin{bmatrix}
        z_{k} & z_{k+1} & \cdots & z_{k+N-1} \\
        z_{k+1} & z_{k+2} & \cdots & z_{k+N} \\
        \vdots & \vdots & \ddots & \vdots \\
        z_{k+T-1} & z_{k+T} & \cdots & z_{k+T+N-2}
    \end{bmatrix}\!.
\end{equation}

\section{Setting \& Goal}\label{sec:setting_goal}
Consider the following discrete-time LPV system in state-space form (LPV-SS)
\begin{subequations}\label{eq:LPV-SS-original}
\begin{align}
    x_{k+1} & = \mathcal{A}(p_k) x_k + \mathcal{B}(p_k) u_k + \mathcal{K}(p_k) e_k, \label{eq:innov:state}\\
    y_k & = \mathcal{C}(p_k) x_k + \mathcal{D}(p_k) u_k + e_k,
\end{align}
where $x_{k} \!\in\! \mathbb{R}^{n_\textrm{x}}$, $u_{k} \!\in\! \mathbb{R}^{n_\textrm{u}}$, $y_k \!\in\! \mathbb{R}^{n_\textrm{y}}$, and $p_k \!\in\! \mathbb{P} \!\subset\! \mathbb{R}^{n_\textrm{p}}$ are the state, input, output, and scheduling signal at time $k \in \mathbb{Z}$, respectively, and $\mathbb{P}$ is a compact set. Meanwhile, $e_{k} \in \mathbb{R}^{n_\textrm{y}}$ is the realization at time $k$ of an i.i.d. zero-mean white innovation process with variance $\sigma_e^2$. Let us consider the state-space matrices to be affine in the scheduling signal, i.e.,
\begin{equation}
    \begin{aligned}
        & \mathcal{A}(p_k)\!=\!A_0 \!+\! \sum_{i=1}^{n_\textrm{p}} p_k^{[i]} A_i,~~\mathcal{B}(p_k) \!=\! B_0 \!+\! \sum_{i=1}^{n_\textrm{p}} p_k^{[i]} B_i,\\
        & \mathcal{K}(p_k)= K_0 + \sum_{i=1}^{n_\textrm{p}} p_k^{[i]} K_i,~~\mathcal{C}(p_k)\!=\!C_0 \!+\! \sum_{i=1}^{n_\textrm{p}} p_k^{[i]} C_i,\\
        & \mathcal{D}(p_k) \!=\! D_0 \!+\! \sum_{i=1}^{n_\textrm{p}} p_k^{[i]} D_i,
    \end{aligned} 
\end{equation}
\end{subequations}
with $\{A_{i},B_{i},K_{i},C_{i},D_{i}\}_{i=1}^{n_\textrm{p}}$ being real, yet \emph{unknown}, matrices characterizing the dynamics of the system. Suppose that the LPV-SS system is \emph{controllable} and \emph{observable} according to the definitions provided in~\cite{Petreczky:17a,Verhoek:25b}, and that the matrices of its state-space model satisfy the following assumption.

\begin{assum}[Stable innovation dynamics] \label{assum:stable_noise_dynamics}\hspace{-0pt}
    By substituting $e_k=y_k-\mathcal{C}(p_k) x_k - \mathcal{D}(p_k) u_k$ into \eqref{eq:innov:state}, the resulting so-called closed-loop innovation dynamics are exponentially stable w.r.t. to the origin under any scheduling sequence $p_k\in \mathbb{P}$ with $k\geq0$ and initial condition $x_0\in\mathbb{R}^{n_\mathrm{x}}$. 
    %scheduling dependent state transition matrix $\bar{\mathcal{A}}(p) = \mathcal{A}(p) - \mathcal{K}(p) \mathcal{C}(p)$ is exponentially stable.
\end{assum}

Note that this is a common assumption for LPV-SS identification (see, e.g., \cite{Cox:21a,Verdult:02a}).

Let us assume that we have excited the system in open-loop and collected the probing inputs, as well as the corresponding outputs and scheduling signals, thus having access to a dataset
\begin{equation}\label{eq:dataset}
    \mathfrak{D}_{N_{\textrm{data}}} = \{u_k, y_k, p_k \}_{k=0}^{N_\textrm{data}-1}.
\end{equation}
Under these assumptions, our goal is to use the available data in \eqref{eq:dataset} to design a \emph{computationally efficient} predictive controller for the system to track a user-defined reference $y^{r}_{k}$, for $k \in \mathbb{N}_0$, \emph{without} performing any explicit identification step.

\section{From data to data-driven predictions}\label{sec:data_to_predictions}
We now leverage ideas from subspace identification (see~\cite{Cox:21a,VanOverschee:96a}) to define a data-based multistep output predictor for \eqref{eq:LPV-SS-original}, that will be later used to formulate the Data-Driven Predictive Control (DDPC) problem. To this end, let us define the scheduling-independent matrices
\begin{subequations}
\begin{equation}\label{eq:scheduless_matrix}
    \begin{aligned}
        A & = \begin{bmatrix}
            A_0 & \cdots & A_{n_\textrm{p}}
        \end{bmatrix},
        & B & = \begin{bmatrix}
            B_0 & \cdots & B_{n_\textrm{p}}
        \end{bmatrix}, \\
        K & = \begin{bmatrix}
            K_0 & \cdots & K_{n_\textrm{p}}
        \end{bmatrix}, & C & = \begin{bmatrix}
            C_0 & \cdots & C_{n_\textrm{p}}
        \end{bmatrix},\\
        D & = \begin{bmatrix}
            D_0 & \cdots & D_{n_\textrm{p}}
        \end{bmatrix},
        & \tilde{I} & = \begin{bmatrix}
            I_{n_\textrm{y}} & 0 &\cdots & 0
        \end{bmatrix},
    \end{aligned}
\end{equation}
and the extended scheduling signal 
\begin{equation}\label{eq:extended_scheduling}
q_k=\begin{bmatrix}
    1 & p_k^{\top}
    \end{bmatrix}^{\top} \in \mathbb{R}^{n_\textrm{q}},~~~\mbox{with}~~n_\textrm{q} = n_\textrm{p} + 1.
\end{equation}
Accordingly, we can equivalently recast \eqref{eq:LPV-SS-original} as
\begin{align}
        \label{eq:LPV-SS-state} x_{k+1} & = A (q_k \otimes x_k) + B (q_k \otimes u_k) + K (q_k \otimes e_k), \\
        \label{eq:LPV-SS-output} y_k & = C (q_k \otimes x_k) + D (q_k \otimes u_k) + \tilde{I}(q_k \otimes e_k).
\end{align}
\end{subequations}
Exploiting this equivalent representation, the $t$-step ahead output of the system can be characterized by recursively evaluating %iterating 
the state equation over such a prediction horizon. Specifically, given an initial condition $x_k$, an input sequence $u_{[k, k+t]}$, an innovation sequence $e_{[k, k+t]}$, and a scheduling signal realization $p_{[k, k+t]}$ (either known or estimated), the $t$-step ahead output can be computed as
\begin{subequations} 
\begin{equation}\label{eq:future_prediction}
    y_{k+t} = \Gamma_t x_{k,t}^F + \mathcal{H}_{\mathrm{d},t} u_{k,t}^F + \mathcal{H}_{\mathrm{s},t} e_{k,t}^F, 
\end{equation}
where 
\begin{align}
&  x_{k,t}^F = q_{k+t} \otimes q_{k+t-1} \otimes \ldots \otimes q_k \otimes x_k,\\
& u_{k,t}^F  = \begin{bmatrix}
        q_{k+t} \otimes q_{k+t-1} \otimes \cdots \otimes q_k \otimes u_k \\
        q_{k+t} \otimes q_{k+t-1} \otimes \cdots \otimes q_{k+1} \otimes u_{k+1} \\
        \vdots \\
        q_{k+t} \otimes u_{k+t}
    \end{bmatrix}, \label{eq:definition-of-uF}\\
&   e_{k,t}^F  = \begin{bmatrix}
        q_{k+t} \otimes q_{k+t-1} \otimes \cdots \otimes q_k \otimes e_k \\
        q_{k+t} \otimes q_{k+t-1} \otimes \cdots \otimes q_{k+1} \otimes e_{k+1} \\
        \vdots \\
        q_{k+t} \otimes e_{k+t}
    \end{bmatrix},
\end{align}
\end{subequations}
$\mathcal{H}_{\mathrm{d},0} = D$, $\mathcal{H}_{\mathrm{s},0} = \tilde{I}$, and
\begin{equation}\label{eq:useful_matrices2}
\begin{aligned}
     &\Gamma_t \!=\! C \displaystyle \!\prod_{\tau=1}^{t}\! 
     \mathtt{A}_\tau^{n_\textrm{q}},~~\mathcal{H}_{\mathrm{d},t}\!=\!\!\left[\begin{array}{@{}c;{2pt/2pt}c@{}}
        C \!\left( \displaystyle \prod_{\tau=1}^{t-1} \mathtt{A}_\tau^{n_\textrm{q}} \right) \mathtt{B}_t^{n_\textrm{q}} \!& 
        \mathcal{H}_{\mathrm{d},t-1}
    \end{array}\right]\!,\\ 
    &\qquad \quad \quad \mathcal{H}_{\mathrm{s},t}\!=\!\!\left[\begin{array}{@{}c;{2pt/2pt}c@{}}
        C \!\left( \displaystyle \prod_{\tau=1}^{t-1} \mathtt{A}_\tau^{n_\textrm{q}} \right) \mathtt{K}_t^{n_\textrm{q}} \!&
        \mathcal{H}_{\mathrm{s},t-1}
    \end{array}\right],
\end{aligned}    
\end{equation}
with 
\begin{equation*}
    \mathtt{A}_\tau^{n}=I_{n^\tau}\otimes A,~~\mathtt{B}_\tau^{n}=I_{n^\tau}\otimes B,~~~\mathtt{K}_\tau^{n}=I_{n^\tau}\otimes K.
\end{equation*}
We then exploit \eqref{eq:LPV-SS-output} to rewrite the innovation as 
\begin{equation}\label{eq:error_dynamics}
    e_k = y_k - C (q_k \otimes x_k) - D (q_k \otimes u_k).
\end{equation}
By substituting this relation into \eqref{eq:LPV-SS-state}, we can then rewrite the state $x_k$ as
\begin{equation*}
        x_k \!\!=\!\!\bar{A} (\zeta_{k-\!1}\!\otimes x_{k-\!1})\!+\! \bar{B}(\zeta_{k-\!1}\!\otimes u_{k-\!1})\!+\! K (q_{k-\!1} \!\otimes\!y_{k-\!1}),
\end{equation*}% <- to avoid an extra space before 'where' in the text
\begin{table*}[!t]
\begin{equation}\label{eq:matrices_defined}
\begin{aligned}
    & \bar{A}\!=\!\!\begin{bmatrix}
        A_0\!-\!\! K_0 C_0 & A_1\!-\!\! [K\!C]_{0}^{1} \!&\! \cdots \!\!&\! A_{n_\textrm{p}} \!\!\!-\!\! [K\!C]_{0}^{n_\textrm{p}} \!&\! -\!K_1 C_1 \!&\! -\! [K\!C]_{1}^{2} \!&\! \cdots \!&\! -\![K\!C]_{1}^{n_\textrm{p}} \!&\! -\!K_2 C_2 \!&\! -\![K\!C]_{2}^{3} \!\;\;\! \cdots \!\;\;\!\end{bmatrix}\!,\mbox{ with } [K\!C]_{i}^{j}\!=\!K_{i}C_{j}\!+\!K_{j}C_{i},\\
    & \bar{B}\!=\!\!\begin{bmatrix} B_0 \!\!-\!\! K_0 D_0 \!&\! B_1 \!-\! [K\!D]_{0}^{1} \!&\!\! \cdots \!\!& \!\! B_{n_\textrm{p}} \!\!-\! [K\!D]_{0}^{n_\textrm{p}} \!&\! \!-\!K_1 D_1 \!&\! \!-\!  [K\!D]_{1}^{2} \!&\!\! \cdots \!\!& \!\!-\![K\!D]_{1}^{n_\textrm{p}} \!&\! \!-\!K_2 D_2 \!&\! \!-\![K\!D]_{2}^{3}  \!\;\;\! \cdots\!\! \;\; \end{bmatrix}\!,\mbox{ with } [K\!D]_{i}^{j}\!=\!K_{i}D_{j}\!+\!K_{j}D_{i}.
\end{aligned}
\end{equation}
\begin{equation}\label{eq:zeta_def}
    \zeta_k = \begin{bmatrix}
        1 & p_k^{[1]} & \cdots &  p_k^{[n_\textrm{p}]} & p_k^{[1]} p_k^{[1]} & \cdots & p_k^{[1]} p_k^{[n_\textrm{p}]}  & p_k^{[2]} p_k^{[2]} &  & p_k^{[2]} p_k^{[3]} & \cdots \;\; p_k^{[n_\textrm{p}]} p_k^{[n_\textrm{p}]}
    \end{bmatrix}^\top,
\end{equation}
\hrule
\end{table*}% <- to avoid an extra space before 'where' in the text
where $\bar{A}$ and $\bar{B}$ are defined as in \eqref{eq:matrices_defined} on Page~\pageref{eq:matrices_defined} and $\zeta_k \in \mathbb{R}^{n_\zeta}$ comprises all unique combinations in $q_k\otimes q_k$ (see \eqref{eq:zeta_def} on Page~\pageref{eq:zeta_def}), with $n_\zeta = \frac{n_\textrm{q}(n_\textrm{q}+1)}{2}$. Using this relation iteratively, it is straightforward to see that $x_k$ can be reconstructed from an $M$-long past trajectory $\{u_{[k-M, k-1]},y_{[k-M, k-1]},p_{[k-M, k-1]}\}$ and the state $x_{k-M}$ as 
\begin{subequations}\label{eq:past_prediction}
\begin{equation}
    x_k = \Lambda_M x_{k,M}^P + \Phi_M u_{k,M}^P + \Psi_M y_{k,M}^P,
\end{equation}
where
\begin{align}
         &x_{k,M}^P = \zeta_{k-1} \otimes \zeta_{k-2} \otimes \cdots \otimes \zeta_{k-M} \otimes x_{k-M}, \\
    & u_{k,M}^P \!=\!\! \begin{bmatrix}
        \zeta_{k-1} \otimes \cdots \otimes \zeta_{k-M} \otimes u_{k-M} \\
        \zeta_{k-1} \otimes \cdots \otimes \zeta_{k-M+1} \otimes u_{k-M+1} \\
        \vdots \\
        \zeta_{k-1} \otimes u_{k-1}
    \end{bmatrix}\!, \\
    & y_{k,M}^P \!=\!\! \begin{bmatrix}
        \zeta_{k-1} \otimes \cdots \otimes \zeta_{k-M+1} \otimes q_{k-M} \otimes y_{k-M} \\
        \zeta_{k-1} \otimes \cdots \otimes \zeta_{k-M+2} \otimes q_{k-M+1} \otimes y_{k-M+1} \\
        \vdots \\
        q_{k-1} \otimes y_{k-1}
    \end{bmatrix},
\end{align}
\end{subequations}
$\Phi_1 = \bar{B}$, $\Psi_1 = K$, and 
\begin{equation}\label{eq:useful_matrices}
    \begin{aligned}
    &\Lambda_M  = \prod_{i=0}^{M-1} \bar{\mathtt{A}}_i^{n_\zeta},
    ~~~\Phi_M  = \left[\begin{array}{@{}c;{2pt/2pt}c@{}}
        \left( \displaystyle \prod_{i=0}^{M-2} \bar{\mathtt{A}}_i^{n_\zeta} \! \right) \bar{\mathtt{B}}_{M-1}^{n_\zeta} &
        \Phi_{M-1}
    \end{array}\right], \\
    & \qquad \qquad \Psi_M  = \left[\begin{array}{@{}c;{2pt/2pt}c@{}}
        \left( \displaystyle \prod_{i=0}^{M-2} \bar{\mathtt{A}}_i^{n_\zeta} \! \right)\mathtt{K}_{M-1}^{n_\zeta} &
        \Psi_{M-1}
    \end{array}\right], 
\end{aligned}
\end{equation}
with
\begin{equation*}
    \bar{\mathtt{A}}_i^{n}=I_{n^i}\otimes \bar{A},\qquad \bar{\mathtt{B}}_i^{n}=I_{n^i}\otimes \bar{B}.
\end{equation*}
Substituting \eqref{eq:past_prediction} into \eqref{eq:future_prediction}, the $t$-step ahead output can ultimately be recast as
\begin{subequations} \label{eq:single_predictor}
\begin{equation}
    y_{k+t} = \Gamma_{x,t} x_{k,M,t}^0 + \Gamma_{z,t} z_{k,M,t}^P + \mathcal{H}_{\mathrm{d},t} u_{k,t}^F + \mathcal{H}_{\mathrm{s},t} e_{k,t}^F,
\end{equation}
where
\begin{align}
    & \Gamma_{x,t} = \Gamma_t \left(I_{n_\textrm{q}^{t+1}} \otimes \Lambda_M \right), \\
    & \Gamma_{z,t} = \Gamma_t \left(I_{n_\textrm{q}^{t+1}} \otimes \begin{bmatrix}
        \Phi_M & \Psi_M
    \end{bmatrix}\right), \\
    & x_{k,M,t}^{0} = q_{k+t} \otimes q_{k+t-1} \otimes \ldots \otimes q_k \otimes x^P_{k,M}, \\
    & z_{k,M,t}^{P} = q_{k+t} \otimes q_{k+t-1} \otimes \ldots \otimes q_k \otimes \begin{bmatrix}
        u^{P}_{k,M} \\ y^{P}_{k,M}
    \end{bmatrix}, \label{eq:definition-of-zP}
\end{align}
where $\Gamma_t$ is defined as in \eqref{eq:useful_matrices2}, and $\Lambda_M$, $\Phi_M$, $\Psi_M$ are introduced in \eqref{eq:useful_matrices}. 
\end{subequations}

While \eqref{eq:single_predictor} returns only the $t$-step ahead output, this relationship allows us to build a multistep output predictor over a prediction horizon $T \in \mathbb{N}$. In particular, by defining
\begin{equation*}
    \begin{aligned}
        & \Gamma_x = \begin{bmatrix}
            \begin{bmatrix}
                 \Gamma_{x,0} & 0 & \cdots & 0
            \end{bmatrix} \\ 
            \begin{bmatrix}
                 \Gamma_{x,1} & \cdots & 0
            \end{bmatrix} \\ 
            \vdots \\ 
            \begin{bmatrix}
                \Gamma_{x,T-2} & 0
            \end{bmatrix} \\ 
            \Gamma_{x,T-1}
        \end{bmatrix}, \quad 
        & \Gamma_z = \begin{bmatrix}
            \begin{bmatrix}
                 \Gamma_{z,0} & 0 & \cdots & 0
            \end{bmatrix} \\ 
            \begin{bmatrix}
                 \Gamma_{z,1} & \cdots & 0
            \end{bmatrix} \\ 
            \vdots \\ 
            \begin{bmatrix}
                \Gamma_{z,T-2} & 0
            \end{bmatrix} \\ 
            \Gamma_{z,T-1}
        \end{bmatrix}, \\
        & \mathcal{H}_\mathrm{d} = \begin{bmatrix}
            \begin{bmatrix}
                \mathcal{H}_{\mathrm{d},0} & 0 & \cdots & 0
            \end{bmatrix} \\ 
            \begin{bmatrix}
                \mathcal{H}_{\mathrm{d},1} & \cdots & 0
            \end{bmatrix} \\
            \vdots \\ 
            \begin{bmatrix}
                \mathcal{H}_{\mathrm{d},T-2} & 0
            \end{bmatrix} \\ 
            \mathcal{H}_{\mathrm{d},T-1} 
        \end{bmatrix}, \quad 
        & \mathcal{H}_\mathrm{s} = \begin{bmatrix}
            \begin{bmatrix}
                \mathcal{H}_{\mathrm{s},0} & 0 & \cdots & 0
            \end{bmatrix} \\ 
            \begin{bmatrix}
                \mathcal{H}_{\mathrm{s},1} & \cdots & 0
            \end{bmatrix} \\
            \vdots \\ 
            \begin{bmatrix}
                \mathcal{H}_{\mathrm{s},T-2} & 0
            \end{bmatrix} \\ 
            \mathcal{H}_{\mathrm{s},T-1} 
        \end{bmatrix}, 
    \end{aligned}
\end{equation*}
and omitting the past horizon $M$ and the future horizon $t = T-1$ to simplify notation, i.e., $x_k^0 \equiv x^0_{k,M,t}$, $z_k^P \equiv z^P_{k,M,t}$, $u_k^F \equiv u^F_{k,t}$, and $e_k^F \equiv e^F_{k,t}$, we obtain: %one gets:
\begin{align}\label{eq:multiple_predictor}
    y_{[k,k+T-1]} = \Gamma_x x_{k}^0 + \Gamma_z z_{k}^{P}+\mathcal{H}_\mathrm{d} u_{k}^{F}+\mathcal{H}_\mathrm{s} e_{k}^{F}.
\end{align}
The same relationship holds for the data in $\mathfrak{D}_{N_{\textrm{data}}}$. Specifically, by introducing the Hankel matrices \eqref{eq:Hankel_def} of ``past'' and ``future'' signals, i.e., 
\begin{align}\label{eq:Hankel_matrices}
    X_0 & = X^0_{k,1,N},
    && Z_P = Z^P_{k,1,N},\\
    Y_F & = Y_{k, T, N}, 
    && U_F = U^F_{k,1,N}, \quad
    E_F = E^F_{k,1,N}, \nonumber
\end{align}
with $N = N_\textrm{data} - M - T + 1$, then the following holds: 
\begin{equation}\label{eq:true_predictor}
    Y_F = \Gamma_x X_0 + \Gamma_z Z_P + \mathcal{H}_\mathrm{d} U_F + \mathcal{H}_\mathrm{s} E_F.
\end{equation}
As both the initial state $X_0$ and the future innovation noise $E_F$ are unavailable in practice, this (exact) representation cannot yet be used to build a data-based predictor. Nonetheless, it can be recovered asymptotically by projecting $Y_F$ on the row span of $\begin{bmatrix}
        Z_P^\top \; U_F^\top\end{bmatrix}$, as formalized in the following lemma.
\begin{lem}[Asymptotic predictor] \label{lem:asymptotic_predictor_lemma}
    Let Assumption~\ref{assum:stable_noise_dynamics} hold. Then, for $N \to \infty$ and $M \to \infty$ the projection $\hat{Y}_F = \Pi_{Z_P, U_F}(Y_F)$ of the multistep predictor $Y_F$ on the row span of $\begin{bmatrix}
        Z_P^\top \; U_F^\top\end{bmatrix}$, i.e.,
         \begin{equation}\label{eq:asymptotic_predictor}
        \hat{Y}_F = \Gamma_z Z_P + \mathcal{H}_\mathrm{d} U_F,
    \end{equation}    
    is an asymptotically unbiased estimate of $Y_F$.
\end{lem}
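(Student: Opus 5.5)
Starting from the exact data equation \eqref{eq:true_predictor}, I project both sides onto the row span of $W=\begin{bmatrix} Z_P^\top & U_F^\top\end{bmatrix}^\top$. By linearity of $\Pi_{Z_P,U_F}(\cdot)$, and because $Z_P$ and $U_F$ lie in the row span of $W$ (so $\Pi_{Z_P,U_F}(Z_P)=Z_P$ and $\Pi_{Z_P,U_F}(U_F)=U_F$), we obtain
\begin{equation*}
\Pi_{Z_P,U_F}(Y_F)=\Gamma_x\,\Pi_{Z_P,U_F}(X_0)+\Gamma_z Z_P+\mathcal{H}_\mathrm{d} U_F+\mathcal{H}_\mathrm{s}\,\Pi_{Z_P,U_F}(E_F).
\end{equation*}
This presumes $WW^\top$ is invertible, which I take as a persistence-of-excitation requirement on the data. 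The claim therefore reduces to showing that the two residual terms vanish in the double limit. It then follows that $Y_F-\hat{Y}_F$ equals $\mathcal{H}_\mathrm{s}E_F$ plus vanishing terms, and this difference has zero mean.

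\emph{Innovation term.} I write $\Pi_{Z_P,U_F}(E_F)=\big(E_FW^\top\big)\big(WW^\top\big)^{-1}W$. Thanks to the $1/\sqrt{N}$ normalization in \eqref{eq:Hankel_def}, $E_FW^\top$ and $WW^\top$ are sample cross-covariances. For $E_FW^\top$, I argue column by column.
\begin{itemize}
\item The entries of $E_F$ are products of scheduling monomials with future innovations $e_{k+j}$, $j\ge 0$.
\item The entries of $Z_P$ involve only past inputs, past outputs, and scheduling.
\item The entries of $U_F$ involve inputs and scheduling. Since the experiment is open-loop, these are independent of $e$.
\end{itemize}
Under the assumption that $p$ is exogenous, i.e., independent of the innovation, the whiteness and zero mean of $e$ give $\mathbb{E}[e_{k+j}\,\cdot]=0$ for every regressor. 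An ergodicity / law-of-large-numbers argument then shows $E_FW^\top\to 0$ almost surely as $N\to\infty$. Uniform persistence of excitation, i.e., $WW^\top$ converging to a positive definite limit, keeps $(WW^\top)^{-1}$ bounded. Hence this term vanishes.

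\emph{Initial-condition term.} This is where $M\to\infty$ and Assumption~\ref{assum:stable_noise_dynamics} enter. The product $\Gamma_x X_0=\Gamma_t(I\otimes\Lambda_M)X_0$ is exactly the contribution of $x_{k-M}$ propagated through the closed-loop innovation dynamics over $M$ steps, multiplied by the bounded future scheduling factors. By exponential stability under any scheduling in the compact $\mathbb{P}$, this contribution is bounded by $c\rho^M\|x_{k-M}\|$ with $\rho<1$, uniformly over the columns. Boundedness of the state second moments (from stability and bounded inputs) then gives $\|\Gamma_xX_0\|$ of order $\rho^M$ in the normalized Frobenius sense. Projection is nonexpansive, so $\|\Gamma_x\Pi(X_0)\|\le\|\Gamma_xX_0\|\to0$ as $M\to\infty$.

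The main obstacle is this last step. Monotonicity of the projection helps only if the decay bound is applied to $\Gamma_xX_0$ as a whole, not to $\Lambda_M$ alone. The reason is that $\Lambda_M$ acts on the Kronecker-lifted vector $x^P_{k,M}$, whose size grows like $n_\zeta^M$. I would therefore avoid bounding $\|\Lambda_M\|$ directly and instead use the identity $\Lambda_Mx^P_{k,M}=$ (closed-loop state transition from $k-M$ to $k$)$\,x_{k-M}$, where the stability assumption applies directly.

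A secondary subtlety is ordering the limits, taking $N\to\infty$ first for each fixed $M$. I would state the result in that sense, interpreting ``asymptotically unbiased'' as $\mathbb{E}[Y_F-\hat{Y}_F]\to0$.
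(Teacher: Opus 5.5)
Your proposal is correct and follows essentially the same route as the paper's proof. You project the exact data equation \eqref{eq:true_predictor}, remove the initial-state term via the identity $\Lambda_M x^P_{k,M} = \bigl(\prod_{i=1}^{M}\bar{\mathcal{A}}(p_{k-i})\bigr)x_{k-M}$ together with exponential stability and the nonexpansiveness of right-multiplication by the row-space projector, and remove the innovation term because $E_F\begin{bmatrix} Z_P^\top & U_F^\top\end{bmatrix}$ is a sample cross-covariance that vanishes as $N\to\infty$.

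The extra hypotheses you state are used only tacitly in the paper's argument: scheduling independent of the innovation, uniformly bounded $(WW^\top)^{-1}$, bounded state second moments, and $N\to\infty$ before $M\to\infty$. Making the first one explicit is worthwhile. With output-dependent scheduling, a factor such as $p_{k+1}$ multiplies both $u_k$ in $U_F$ and $e_k$ in $E_F$ while itself depending on $e_k$, so that cross-covariance need not vanish.
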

\begin{pf}
    Under Assumption~\ref{assum:stable_noise_dynamics}, there exists a finite constant $c > 0$ and a constant $0 \leq \lambda < 1$ such that
    \begin{align*}
        \left \lVert \Lambda_M x_{k,M}^P \right \rVert_2  
        & =\! \left \lVert \left( \prod_{i=1}^{M} \bar{\mathcal{A}}(p_{k-i}) \right) x_{k-M} \right \rVert_2\!\!\!\leq c \lambda^M \left \lVert x_{k-M} \right \rVert_2.
    \end{align*}
     Hence, for $M \to \infty$, $\lVert \Lambda_M x_{k,M}^P \rVert_2$ converges exponentially to zero under any $p_{k-i}\in\mathbb{P}$. This implies that also $\lVert \Gamma_x X_0 \rVert_2$ converges exponentially to zero and the effect of the initial state $X_0$ in \eqref{eq:true_predictor} vanishes for $M \to \infty$. The same applies for the projection $\hat{X}_0 = \Pi_{Z_P,U_F}(X_0)$, as 
     \begin{equation*}
         \lVert \Gamma_x \hat{X}_0 \rVert_2 = \left\lVert \Gamma_x X_0 \begin{bmatrix}
            Z_P \\ U_F
        \end{bmatrix}^\dagger \begin{bmatrix}
            Z_P \\ U_F
        \end{bmatrix} \right\rVert_2 \leq \lVert \Gamma_x X_0 \rVert_2, 
     \end{equation*}
     where $\dagger$ denotes the Moore–Penrose inverse. Meanwhile, by definition, the projection $\hat{E}_F = \Pi_{Z_P, U_F}(E_F)$ of the future innovations is:   
    \begin{equation*}
        \hat{E}_F = E_F \begin{bmatrix}
            Z_P^\top & U_F^\top
        \end{bmatrix} \left( \begin{bmatrix}
            Z_P \\ U_F
        \end{bmatrix} \begin{bmatrix}
            Z_P^\top & U_F^\top
        \end{bmatrix} \right)^{-1} \begin{bmatrix}
            Z_P \\ U_F
        \end{bmatrix},
    \end{equation*}
    with $E_F [Z_P^\top \; U_F^\top]$ being the sample covariance of the future innovation with past data and future inputs. Since, by assumption, the innovation is white and data are collected in open-loop, $E_F$ is uncorrelated to both past input/output samples and future input data, the sample covariance in turn converges in probability to zero. The result in \eqref{eq:asymptotic_predictor} ultimately follows from the linearity of the projection operator, according to which $\hat{E}_F$ also converges to zero in probability. $\hfill \blacksquare$
\end{pf}
Note that the data-based relationship in \eqref{eq:asymptotic_predictor} represents a common starting point for subspace identification for both LTI \cite{Favoreel:99a,VanOverschee:94a} and LPV systems \cite{Cox:21a}. At the same time, this asymptotically unbiased predictor still relies on the knowledge of the extended controllability matrix $\Gamma_z$ and of the extended Markov parameters matrix $\mathcal{H}_\mathrm{d}$.

\begin{table}[!tb]
    \centering
        \caption{Number of rows of the Hankel data matrices. Note that for an LPV system, $n_\textrm{q}=n_\textrm{p}+1 \geq 2$ and $n_\zeta \geq 3$ always hold.}
    \label{tab:Hankel_matrices_size}
    \begin{tabular}{|l|c|c|c|}
        \cline{2-4}
        \multicolumn{1}{c|}{} & $\bm{Z_P}$ & $\bm{U_F}$ & $\bm{Y_F}$ \\
        \hline
        n. rows & $(n_\textrm{u} n_\zeta + n_\textrm{y} n_\textrm{q}) \dfrac{n_\zeta^M-1}{n_\zeta-1} n_\textrm{q}^T$ & $n_\textrm{u} n_\textrm{q} \dfrac{n_\textrm{q}^T - 1}{n_\textrm{q} - 1}$ & $n_\textrm{y} T$\\
        \hline
    \end{tabular}
\end{table}

Nonetheless, given the linear structure of \eqref{eq:asymptotic_predictor}, the output predictor can be defined from data using Willems' fundamental lemma \cite{Willems:05a}. To this end, we use the data in $\mathfrak{D}_{N_{\textrm{data}}}$ to define the Hankel matrices $Z_P$, $U_F$, and $Y_F$ as in \eqref{eq:Hankel_matrices} with $k=M$. The dimensions of the matrices are specified in \tablename{~\ref{tab:Hankel_matrices_size}}. The data in $\mathfrak{D}_{N_{\textrm{data}}}$ are assumed to satisfy the following condition, which also characterize a key property of these Hankel matrices.
\begin{assum}[Persistence of excitation]\label{assum:persistence_excitation}
    The input and scheduling sequences comprised in the dataset $\mathfrak{D}_{N_{\textrm{data}}}$ are persistently exciting, i.e., 
    \begin{equation}\label{eq:xi}
        \Xi =
        \begin{bmatrix}
            Z_P \\ U_F
        \end{bmatrix} = 
        \begin{bmatrix}
            Z_{M,1,N}^P \\ U_{M,1,N}^F
        \end{bmatrix},
    \end{equation} 
    has full row rank. 
\end{assum}
\begin{rem}[Persistently exciting scheduling]
    In the LPV context, the rank condition in Assumption~\ref{assum:persistence_excitation} is influenced by the scheduling signal. When the scheduling signal $p$ is endogenous, such a requirement is generally satisfied during data collection. Indeed, persistently exciting inputs would lead to a persistently exciting scheduling signal. However, when $p$ is exogenous and not controllable, guaranteeing persistence of excitation might be nontrivial. Future work will be devoted to addressing this issue. 
\end{rem}

For a fixed scheduling trajectory, \eqref{eq:asymptotic_predictor} is linear in the lifted variables $Z_P$ and $U_F$ and can be interpreted as a relaxed LTI embedding of the LPV predictor, as discussed in \cite{Markovsky:26a}. Thus, under Assumption~\ref{assum:persistence_excitation}, we can formulate a data-driven predictor through Willems' Lemma~\cite{Willems:05a}. \\
To this end, for a single online trajectory with past window $[-M,-1]$ and prediction window $[0,T-1]$, we introduce the shorthand notation
\begin{equation} \label{eq:shorthand-notation}
    z^P := z^P_0,\quad
    u^F := u^F_0,\quad
    y^F := y_{[0,T-1]},
\end{equation}
where $z^P_0$ and $u^F_0$ are defined according to \eqref{eq:definition-of-zP} and \eqref{eq:definition-of-uF}.
\begin{lem}[Data-based LPV predictor] \label{lem:Willems_lemma}
    Under Assumption~\ref{assum:persistence_excitation} and for a fixed scheduling trajectory $p_{[-M, T-1]}$, a sequence $u_{[-M, T-1]}$, $y_{[-M, T-1]}$, is a feasible trajectory of the lifted predictor \eqref{eq:asymptotic_predictor} if and only if there exists $g \in \mathbb{R}^{N}$ such that
    \begin{equation}\label{eq:data-based_dynamics}
        \begin{bmatrix}
            z^P \\ u^F \\ y^F
        \end{bmatrix} = \begin{bmatrix}
            Z_P \\ U_F \\ Y_F
        \end{bmatrix} g.
    \end{equation}
\end{lem}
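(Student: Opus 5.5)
Since \eqref{eq:asymptotic_predictor} is a static linear map in the lifted variables, the statement is a linear-algebra fact about the row space of the data matrices. The plan is to fix what ``feasible trajectory of the lifted predictor'' means: the scheduling $p_{[-M,T-1]}$ is fixed, the lifted vectors $z^P$ and $u^F$ are built from $u_{[-M,T-1]}$, $y_{[-M,-1]}$ and $p_{[-M,T-1]}$ via \eqref{eq:definition-of-zP} and \eqref{eq:definition-of-uF}, and the trajectory is feasible when $y^F = \Gamma_z z^P + \mathcal{H}_\mathrm{d} u^F$. The data matrices satisfy the same relation column-wise, $Y_F = \Gamma_z Z_P + \mathcal{H}_\mathrm{d} U_F$, once $Y_F$ is identified with $\hat Y_F$, as justified asymptotically by Lemma~\ref{lem:asymptotic_predictor_lemma}. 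With this in place, I would prove the two implications separately.

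\emph{Sufficiency.} Assume $g$ satisfies \eqref{eq:data-based_dynamics}. Then
\[
y^F = Y_F g = (\Gamma_z Z_P + \mathcal{H}_\mathrm{d} U_F) g = \Gamma_z z^P + \mathcal{H}_\mathrm{d} u^F,
\]
so the trajectory satisfies \eqref{eq:asymptotic_predictor}. This direction needs no excitation assumption.

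\emph{Necessity.} Assume the trajectory is feasible. By Assumption~\ref{assum:persistence_excitation}, $\Xi$ has full row rank, so its columns span the whole lifted space, and the linear system $\Xi g = [\,(z^P)^\top\; (u^F)^\top\,]^\top$ has a solution, for instance $g = \Xi^\top(\Xi\Xi^\top)^{-1}[\,(z^P)^\top\; (u^F)^\top\,]^\top$. It remains to match the output block:
\[
Y_F g = \Gamma_z Z_P g + \mathcal{H}_\mathrm{d} U_F g = \Gamma_z z^P + \mathcal{H}_\mathrm{d} u^F = y^F.
\]
This gives \eqref{eq:data-based_dynamics}.

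The main obstacle I expect is the identification $Y_F = \Gamma_z Z_P + \mathcal{H}_\mathrm{d} U_F$ for the measured data. In finite samples, $Y_F$ also carries the terms $\Gamma_x X_0$ and $\mathcal{H}_\mathrm{s} E_F$ from \eqref{eq:true_predictor}. I would handle this by replacing $Y_F$ with its projection $\Pi_{Z_P,U_F}(Y_F)$, which lies exactly in the row span of $\Xi$ and coincides with $\Gamma_z Z_P + \mathcal{H}_\mathrm{d} U_F$ in the limit of Lemma~\ref{lem:asymptotic_predictor_lemma}. The equivalence then holds exactly for the projected data and asymptotically for the raw $Y_F$.

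A second subtlety is that, for a fixed scheduling trajectory, the lifted $u^F$ is a structured Kronecker function of $u$, while $U_F g$ is an arbitrary vector in the lifted space. The ``only if'' direction is unaffected, since a consistent trajectory yields a consistent right-hand side. For the ``if'' direction, I would state that $g$ must reproduce Kronecker-consistent $z^P, u^F$ built from the given $p$, and I would not claim anything about inconsistent lifted vectors.
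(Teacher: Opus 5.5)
Your argument is correct and is essentially the paper's proof. The paper just invokes the Willems' fundamental lemma argument of Berberich et al.\ with $z^P$ playing the role of the initial condition and $u^F$ that of the input. Your two steps, linearity of $y^F=\Gamma_z z^P+\mathcal{H}_\mathrm{d}u^F$ for sufficiency and full row rank of $\Xi$ to solve $\Xi g=[\,(z^P)^\top\;(u^F)^\top]^\top$ for necessity, are exactly what that citation reduces to for this static lifted map. Your caveat that $Y_F=\Gamma_z Z_P+\mathcal{H}_\mathrm{d}U_F$ holds only for the projected (or asymptotic) data is a valid refinement the paper leaves implicit: with raw noisy $Y_F$ the stacked matrix generically has full row rank (cf.\ the nonsingular $L_{33}$), so the ``if'' direction would fail, and with projected data the equivalence is exact for the least-squares map $Y_F\Xi^\dagger$ rather than the true $[\Gamma_z\;\mathcal{H}_\mathrm{d}]$.
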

\begin{pf}
    The proof follows from that of the Willems' fundamental lemma in \cite{Berberich:20b}, with the difference that the LTI-like dynamics in \eqref{eq:asymptotic_predictor} are built from the scheduling-dependent past state $z^P$ and future input $u^F$. $\hfill \blacksquare$
\end{pf}

\begin{rem}[Scheduling dependence]
    In \eqref{eq:data-based_dynamics}, the input $u^F$ (see \eqref{eq:definition-of-uF}) depends on the \emph{future} scheduling trajectory $p_{[0,T-1]}$, whereas $y^F$ is scheduling-independent. Moreover, according to \eqref{eq:definition-of-zP}, $z^P$ depends on the entire scheduling trajectory $p_{[-M, T-1]}$.
\end{rem}

Lemma~\ref{lem:Willems_lemma} shows that, provided that the future scheduling signal $p_{[0,T-1]}$ is known, the future output $y^F$ can be obtained from a batch of data, for any given initial condition $z^P$ and future control action $u^F$. However, this predictor is not structurally consistent with the lifted LPV dynamics \cite{Markovsky:26a}. Indeed, even if every trajectory generated by the LPV system satisfies \eqref{eq:data-based_dynamics}, an arbitrary choice of $g$ yields a future input sequence $u^F = U_F g$ that does not necessarily respect the Kronecker structure in \eqref{eq:definition-of-uF}. Nevertheless, by introducing the block-diagonal matrix
    \begin{equation}\label{eq:large_calligraphic_P}
        \mathscr{P} = \begin{bmatrix}
    q_{T-1} \otimes \ldots \otimes q_0 \otimes I_{n_\mathrm{u}} & \cdots & 0\\
        \vdots & \ddots & \vdots  \\
        0 & \cdots & q_{T-1} \otimes I_{n_\mathrm{u}} \end{bmatrix},
    \end{equation}
we can show that the prediction error resulting from using \eqref{eq:data-based_dynamics} is bounded, as formalized next. 

\begin{thm}[Prediction error bound] \label{thm:consistency_error}    
    Fix a future scheduling trajectory $p_{[0,T-1]}$ and a lifted regressor $z^P$. Let $g\in\mathbb{R}^N$ satisfy $Z_P g=z^P$, and define the relaxed lifted input and the corresponding predicted output as $u^F = U_F g$ and $y^F = Y_F g$.
    Let $\tilde{u}^F$ be the input sequence closest to $u^F$ that respects the Kronecker structure in \eqref{eq:definition-of-uF}. Moreover, let $\tilde{y}^F$ be the output sequence generated by the predictor from $\tilde{u}^F$ and $z^{P}$. Then the following holds:
    \begin{subequations}
    \begin{align}
        \varepsilon_u & = \left\lVert (I - \mathscr{P}_\Pi) U_F g \right\rVert_2, \\
        \varepsilon_y & \leq \left\lVert \mathcal{H}_\mathrm{d} \right\rVert_2 \left\lVert (I - \mathscr{P}_\Pi) U_F g \right\rVert_2, 
    \end{align}
    where 
    \begin{equation}
        \varepsilon_u =\lVert u^F - \tilde{u}^F \rVert_2, \quad
        \varepsilon_y = \lVert y^F - \tilde{y}^F \rVert_2,
    \end{equation}
    $\mathscr{P}$ is defined in \eqref{eq:large_calligraphic_P} and 
    \begin{equation}\label{eq:projected_p}
    \mathscr{P}_\Pi = \mathscr{P} \left( \mathscr{P}^\top \mathscr{P} \right)^{-1} \mathscr{P}^\top.
    \end{equation} 
    \end{subequations}
\end{thm}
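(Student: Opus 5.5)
The plan is to identify the set of Kronecker-consistent lifted inputs with the column space of $\mathscr{P}$, so that ``closest consistent input'' becomes an orthogonal projection onto $\mathrm{range}(\mathscr{P})$. Both bounds then follow by linearity of the predictor \eqref{eq:asymptotic_predictor}.

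\emph{Step 1: consistent inputs form $\mathrm{range}(\mathscr{P})$.} By \eqref{eq:definition-of-uF} with $k=0$ and $t=T-1$, a lifted input respects the Kronecker structure exactly when it equals $\mathscr{P}\,u_{[0,T-1]}$ for some raw input sequence $u_{[0,T-1]}\in\mathbb{R}^{n_\mathrm{u}T}$. This holds because the $j$-th block of $u^F$ is $(q_{T-1}\otimes\cdots\otimes q_j\otimes I_{n_\mathrm{u}})u_j$. The first entry of every $q_j$ equals $1$, so each diagonal block of $\mathscr{P}$ contains $I_{n_\mathrm{u}}$ as a sub-block and has full column rank. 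Since $\mathscr{P}$ is block diagonal, it also has full column rank. Hence $\mathscr{P}^\top\mathscr{P}$ is invertible, and $\mathscr{P}_\Pi$ in \eqref{eq:projected_p} is the orthogonal projector onto the set of consistent inputs.

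\emph{Step 2: the bound on $\varepsilon_u$.} The input closest to $u^F=U_Fg$ in $\mathrm{range}(\mathscr{P})$ is obtained by solving the least-squares problem $\min_v\lVert U_Fg-\mathscr{P}v\rVert_2$. Its solution is $v^\star=(\mathscr{P}^\top\mathscr{P})^{-1}\mathscr{P}^\top U_Fg$, which gives $\tilde u^F=\mathscr{P}_\Pi U_Fg$. Consequently $\varepsilon_u=\lVert(I-\mathscr{P}_\Pi)U_Fg\rVert_2$, which is the first claim and holds with equality.

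\emph{Step 3: the bound on $\varepsilon_y$.} I will use the predictor \eqref{eq:asymptotic_predictor} and argue that both outputs satisfy it with the same $z^P$. This is where care is needed. For $\tilde y^F$, the statement says it is generated by the predictor, so $\tilde y^F=\Gamma_z z^P+\mathcal{H}_\mathrm{d}\tilde u^F$. For $y^F=Y_Fg$, Lemma~\ref{lem:Willems_lemma} states that every data combination $(Z_Pg,U_Fg,Y_Fg)$ is a trajectory of the lifted predictor, which is linear in the lifted variables. I would justify this directly: in the asymptotic regime of Lemma~\ref{lem:asymptotic_predictor_lemma}, $Y_F=\Gamma_zZ_P+\mathcal{H}_\mathrm{d}U_F$, and multiplying on the right by $g$ gives $y^F=\Gamma_z z^P+\mathcal{H}_\mathrm{d}u^F$. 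Subtracting the two expressions cancels the $\Gamma_z z^P$ term, leaving $y^F-\tilde y^F=\mathcal{H}_\mathrm{d}(u^F-\tilde u^F)=\mathcal{H}_\mathrm{d}(I-\mathscr{P}_\Pi)U_Fg$. Submultiplicativity of the induced $2$-norm then yields $\varepsilon_y\le\lVert\mathcal{H}_\mathrm{d}\rVert_2\,\lVert(I-\mathscr{P}_\Pi)U_Fg\rVert_2$.

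The main obstacle is the step that treats $Y_Fg$ as exactly governed by \eqref{eq:asymptotic_predictor}. With finite data, \eqref{eq:true_predictor} contributes the extra terms $\Gamma_xX_0g+\mathcal{H}_\mathrm{s}E_Fg$. I would handle this by stating that the bound refers to the lifted predictor of Lemma~\ref{lem:Willems_lemma}, that is, its asymptotic form, in which these terms vanish by Lemma~\ref{lem:asymptotic_predictor_lemma}. Otherwise I would note that they appear as an additive residual.
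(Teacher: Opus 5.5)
Your proposal follows the paper's proof essentially step for step: consistent lifted inputs are identified with $\mathrm{range}(\mathscr{P})$, $\tilde u^F=\mathscr{P}_\Pi U_Fg$, the $\Gamma_z z^P$ term cancels, and submultiplicativity finishes the bound, while your full-column-rank argument for $\mathscr{P}$ supplies the invertibility of $\mathscr{P}^\top\mathscr{P}$ that the paper takes for granted. Like the paper, you end up treating $Y_Fg$ as obeying $\Gamma_z z^P+\mathcal{H}_\mathrm{d}u^F$, but note that Lemma~\ref{lem:asymptotic_predictor_lemma} removes only the \emph{projection} of $\mathcal{H}_\mathrm{s}E_F$ (so $\hat Y_F$, not $Y_F$, equals $\Gamma_zZ_P+\mathcal{H}_\mathrm{d}U_F$); hence $\mathcal{H}_\mathrm{s}E_Fg$ vanishes asymptotically for $g$ in the row span of $\Xi$ (i.e.\ $\gamma_3=0$) but not for a general solution of $Z_Pg=z^P$, which makes your ``additive residual'' fallback the more accurate caveat.
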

\begin{pf}
    Let $u_{[0, T-1]}$ be a scheduling-independent input sequence. From \eqref{eq:definition-of-uF}, any future input sequence consistent with the LPV dynamics can be written as $u^F =\mathscr{P} u_{[0,T-1]}$, with $\mathscr{P}$ defined in \eqref{eq:large_calligraphic_P}. 
    The consistency error $\varepsilon_u$ can thus be computed as the distance of $U_F g$ from its projection on $\mathrm{range}(\mathscr{P})$, i.e., 
    \begin{align*}
        \varepsilon_u\!\! & =\! \left \lVert u^F \!\!\!-\! \tilde{u}^F \right \rVert_2\!=\! \left \lVert U_F g \!-\! \Pi_\mathscr{P} \left(U_F g\right) \right \rVert_2 \!=\! \left \lVert (I \!\!-\! \mathscr{P}_\Pi) U_F g \right \rVert_2.
    \end{align*}
    Since $z^P$ is fixed, the predicted outputs associated with $u^F = U_F g$ and $\tilde{u}^F=\Pi_\mathscr{P}(U_F g)$ yield
    \begin{align*}
        \varepsilon_y & = \lVert y - \tilde{y} \rVert_2 \!=\! \left \lVert \Gamma_z z^P \!+\! \mathcal{H}_\mathrm{d} u^F \!-\! \Gamma_z z^P \!-\! \mathcal{H}_\mathrm{d} \tilde{u}^F \right \rVert_2 \\
        & = \left \lVert \mathcal{H}_\mathrm{d} \left( u^F - \tilde{u}^F \right) \right \rVert_2  \leq \left\lVert \mathcal{H}_\mathrm{d} \right\rVert_2 \left\lVert (I - \mathscr{P}_\Pi) U_F g \right\rVert_2, 
    \end{align*}
    which concludes the proof.
    $\hfill \blacksquare$
\end{pf}
Note that structural consistency could eventually be enforced by explicitly imposing $u^F = \mathscr{P}u_{[0,T-1]}$, where $u_{[0,T-1]}$ is the scheduling-independent part of $u^F$, as done in \cite{Verhoek:21a,Verhoek:25a,Verhoek:25b} with a different data-driven predictor. This parameterization would also reduce the dimension of the input decision variable in a control problem, as $u_{[0,T-1]}$ has fewer components than $u^F$. In our formulation, however, we keep the scheduling signal directly embedded in $u^F$. As shown later, this choice enables an efficient optimization reformulation, in which the number of online decision variables can be reduced compared with that obtained by explicitly enforcing $u^F = \mathscr{P}u_{[0,T-1]}$.

% However, while setting $u^F = \mathscr{P} u$ directly guarantees the structural consistency of the predictor, it also requires accurate knowledge of the future scheduling signal, which is often noisy and is only approximately available. Not enforcing this structural requirement could thus reduce prediction bias and improve robustness, at the expense of increased prediction variance.

\section{LPV-DeePC and its limitations}\label{sec:limitations}
Building on the data-driven predictor \eqref{eq:data-based_dynamics}, we can now formulate a finite-horizon optimal control problem in a similar way to DeePC (see \cite{Coulson:19a}), but with scheduling-dependent Hankel matrices.

Let $y^r$ be an output reference signal, $\mathbb{U} \subseteq \mathbb{R}^{n_\textrm{u}}$ and $\mathbb{Y} \subseteq \mathbb{R}^{n_\textrm{y}}$ be polytopic sets defining input and output constraints, respectively. Moreover, let $u_{[0, T-1]}$ denote the scheduling-independent input sequence extracted from $u^F$. The LPV-DeePC problem is
\begin{subequations}\label{eq:DeePC}
    \begin{align}
        \min_{g} & \quad \sum_{k=0}^{T-1} \lVert y_k^F - y_{k}^{r} \rVert_Q^2 + \lVert u_k \rVert_R^2 \label{eq:DeePC_cost} \\
        \text{s.t.} & \quad \begin{bmatrix}
            z^P \\ u^F \\ y^F
        \end{bmatrix} = \begin{bmatrix}
            Z_P \\ U_F \\ Y_F
        \end{bmatrix} g,  \label{eq:DeePC_predictor}\\
        & \quad u_k \in \mathbb{U}, \; y_k^F \in \mathbb{Y}
         \label{eq:DeePC_saturation}
    \end{align}
\end{subequations}
with $Q \in \mathbb{R}^{n_\textrm{y} \times n_\textrm{y}}$, $R \in \mathbb{R}^{n_\textrm{u} \times n_\textrm{u}}$ being positive definite matrices that weight the tracking error and the control effort, respectively. 
Although the predictor \eqref{eq:data-based_dynamics} depends on the lifted input $u^F$, the cost and constraints are imposed only on the scheduling-independent inputs $u_k$, since these are the physical control actions subject to actuation limits.

As remarked in the previous section, the predictor used in \eqref{eq:DeePC} is not structurally consistent with the lifted LPV dynamics. While structural consistency is desirable for accurate output prediction, ``small''  inconsistencies can be tolerated by MPC/DPC problems (see, e.g., the discussion in  \cite{Changrui:24a,Patwardhan:89a}). At the same time, the predictor's inconsistency can be circumvented by explicitly imposing $u^F = \mathscr{P}u_{[0,T-1]}$ or by changing the cost function in~\eqref{eq:DeePC_cost} as
    \begin{equation}
        \sum_{k=0}^{T-1} \lVert y_k^F \!-\! y_{k}^{r} \rVert_Q^2 + \lVert u_k \rVert_R^2 + \lambda_\varepsilon \left\lVert (I \!-\! \mathscr{P}_\Pi) U_F g \right\rVert_2^2,
    \end{equation}
where $\mathscr{P}_{\Pi}$ is defined in \eqref{eq:projected_p} and $\lambda_\varepsilon \geq 0$ is a tunable hyperparameter. Both options require fixing a future scheduling trajectory, either known a priori or iteratively estimated (see Remark~\ref{rem:scheduling_prediction}). 
Once such a trajectory is fixed, enforcing consistency introduces two intertwined problems of practical and computational nature. Indeed, guaranteeing Assumption~\ref{assum:persistence_excitation} implies imposing on the number of available data $N_{\textrm{data}}$ a lower bound that depends on the dimensions of inputs, outputs, and scheduling signal, as well as prediction horizon $T$ and the past horizon $M$. Hence, as shown in Table~\ref{tab:data_for_excitation}, the amount of data required to guarantee persistence of excitation increases exponentially and, even for relatively small values\footnote{$M$ should taken sufficiently large to guarantee that the initial condition $X_0$ of the collected trajectory can be neglected, as is well known in subspace identification~\cite{Cox:21a,vanWingerden:09a,Verdult:05a}).} of $M$ and $T$, the data collection becomes easily impractical. As the column size of~$\Xi$ in \eqref{eq:xi} must be greater than or equal to its exponentially increasing row size to guarantee persistence of exitation, then also the size of the optimization variable $g$ grows exponentially (a well-known problem in the literature \cite{Verheijen:23a}). Hence, enforcing consistency requires not only an impractical amount of data, but it also leads to a problem that is computationally intractable (especially when solved in real time).

\begin{table}[!tb]
    \centering
    \caption{Minimum number of data needed to have the Hankel matrix in \eqref{eq:data-based_dynamics} full row rank for the smallest possible LPV system, i.e., for $n_\textrm{p} = n_\textrm{u} = n_\textrm{y} = 1$.}
    \label{tab:data_for_excitation}
    \begin{tabular}{|c|cccc|}
        \cline{2-5}
        \multicolumn{1}{c|}{} & $\bm{T=1}$ & $\bm{T=5}$ & $\bm{T=10}$ & $\bm{T=20}$ \\
        \hline 
        $\bm{M=1}$ & 14 & 232 & 7186 & $7.3 \cdot 10^{6}$ \\
        \hline
        $\bm{M=5}$ & 1218 & 19436 & 619520 & $6.4 \cdot 10^{8}$ \\
        \hline
        $\bm{M=10}$ & 295253 & $4.7 \cdot 10^{6}$ & $1.5 \cdot 10^{8}$ & $1.5 \cdot 10^{11}$\\
        \hline
    \end{tabular}  
\end{table}

To cope with the increasing dimension of the optimization variable with the size of the dataset, \cite{Verhoek:25a} proposes to modify the multistep predictor into a recursive, one-step-ahead predictor. By doing so, a trade-off arises between computational complexity and suboptimality (see \cite{Kohler:22a} for a discussion), making the problem more suitable for online implementation. Rather than modifying the nature of the predictor, we instead take inspiration from the $\gamma$-DDPC framework proposed in \cite{Breschi:23a}. We thus compute the LQ decomposition 
\begin{equation}\label{eq:LQ_decomposition}
    \begin{bmatrix}
        Z_P \\ U_F \\ Y_F
    \end{bmatrix} = \begin{bmatrix}
        L_{11} & 0 & 0 \\
        L_{21} & L_{22} & 0 \\
        L_{31} & L_{32} & L_{33}
    \end{bmatrix} \begin{bmatrix}
        Q_1 \\ Q_2 \\ Q_3
    \end{bmatrix},
\end{equation}
where, under Assumption~\ref{assum:persistence_excitation}, $L_{i,i}$, $i \in \{1,2,3\}$, are non-singular and $Q_i$, $i \in \{ 1,2,3 \}$, are orthonormal matrices. Substituting \eqref{eq:LQ_decomposition} into \eqref{eq:data-based_dynamics}, then the following holds:
\begin{equation}
    \begin{bmatrix}
        z^P \\ u^F \\ y^F
    \end{bmatrix} = \begin{bmatrix}
        L_{11} & 0 & 0 \\
        L_{21} & L_{22} & 0 \\
        L_{31} & L_{32} & L_{33}
    \end{bmatrix} \begin{bmatrix}
        \gamma_1 \\ \gamma_2 \\ \gamma_3
    \end{bmatrix},
\end{equation}
where $z^P$ and $u^F$ satisfy the structure in \eqref{eq:single_predictor} and \eqref{eq:definition-of-uF}, respectively, and $\gamma_1 = Q_1 g$, $\gamma_2 = Q_2 g$, and $\gamma_3 = Q_3 g \in \mathbb{R}^{Tn_\textrm{y}}$. Since the scheduling signal $p_{[-M, -1]}$ is known at the time instant $k=0$, the initial condition $z^P$ is known and $\gamma_1$ can be explicitly solved a priori, i.e., 
\begin{equation}\label{eq:gamma_1_opt}
 \gamma_1^\star = L_{11}^{-1} z^P,   
\end{equation}
 allowing us to finally recast the LPV version of the $\gamma$-DDPC problem as
\begin{equation}\label{eq:gammaDDPC}
    \begin{aligned}
        \min_{\gamma_2, \gamma_3} & \sum_{k=0}^{T-1} \lVert y_k^F \!-\! y_{k}^r \rVert_Q^2 \!+\! \lVert u_k \rVert_R^2 \!+\! \beta_2 \lVert \gamma_2 \rVert_2^2 \!+\! \beta_3 \lVert \gamma_3 \lVert_2^2, \\
        \text{s.t.} & \quad \begin{bmatrix}
            u^F \\ y^F
        \end{bmatrix} \!\!=\! \begin{bmatrix}
            L_{21} & L_{22} & 0 \\
            L_{31} & L_{32} & L_{33}
        \end{bmatrix} \!\!\begin{bmatrix}
            \gamma_1^\star \\ \gamma_2 \\ \gamma_3
        \end{bmatrix}, \\
        & \quad \gamma_1^\star = L_{11}^{-1} z^P, \\
        & \quad u_k \in \mathbb{U}, \; y_k^F \in \mathbb{Y}, 
    \end{aligned}
\end{equation}
where $\gamma_2$ and $\gamma_3$ have a dimension that depends on the number of rows in $U_F$ and $Y_F$, respectively, while  $\beta_2, \beta_3 \geq 0$ weight the 2-norm regularizations on $\gamma_2$ and $\gamma_3$, used to empirically counteract the impact of inconsistencies and measurement noise (see, e.g., \cite{Breschi:23b}). This reformulation allows us to obtain a design problem with optimization variables whose dimension is independent of the number of data available to construct the data-driven predictor. However, the dimension of the optimization problem still depends on the row size of $Z_{P}$ and $U_F$, and, hence, it may become easily intractable.   

\begin{rem}[Knowledge of the scheduling signal]\label{rem:scheduling_prediction}
    So far, we have assumed that the scheduling trajectory over the prediction horizon is known. However, such an assumption does not often hold in practice. Indeed, in many applications, the scheduling signal is defined by an input/output dependent \emph{scheduling map} $\phi$. Assuming such map to be known a priori and given $p_k:=\phi(u_,y_k)$, a possible strategy to estimate future scheduling signals is the Gain-Scheduling (GS) approach \cite{Rugh:00}. In this case, $p_i$ is kept equal to $\phi(u_k,y_k)$ for all $i = k+1, \dots, k+t$, hence neglecting the parameters' variation induced by the scheduling in \eqref{eq:LPV-SS-original}. Alternatively, the future scheduling trajectory can be iteratively refined using the Sequential Scheduling Synthesis (SSS) method proposed in~\cite{cisneros2020nonlinear} and already applied to data-driven predictive control in~\cite{Verhoek:25a}. In this case, the predictive control problem is iteratively solved multiple times per time step. At every iteration, the optimal inputs and predicted outputs are used to update $p_{[k+1, k+t]}$ using the scheduling map~$\phi$. The updated scheduling trajectory will be used in the next iteration until convergence. While further increasing the computational complexity of the control scheme, this second approach is guaranteed, under mild conditions (see~~\cite{hespe2021convergence}) to have local contraction properties and, thus, convergence guarantees.   
\end{rem}

\section{Tackling LPV-DDPC intractability}\label{sec:intractability}
After performing the LQ decomposition of the Hankel data matrices to limit the DeePC's exponentially increasing number of optimization variables, we now handle the issues related to the dimensions of $Z_P$ and $U_F$ (see \tablename{~\ref{tab:Hankel_matrices_size}}) by building a reduced-order approximation of the data-based dynamics. To this end, we examine the information brought about by the rows of $Z_P$ and $U_F$. In this respect, we notice that, due to the Kronecker products, most rows of $Z_P$ and $U_F$ contain high-order elements in the scheduling variable, i.e., multiple products of its values at possibly different time instants. Therefore, whenever the scheduling signal has components with absolute magnitude smaller than one, then high-order elements would tend to zero and, as a consequence, most rows of $Z_P$ and $U_F$ would be uninformative. 

Although this condition is not generally satisfied by the controlled LPV-SS system \eqref{eq:LPV-SS-original}, the system can be equivalently written in a scaled form that satisfies such a condition, as the scheduling signal belongs to a bounded set $\mathbb{P}$. Specifically, we can normalize the scheduling variable's components as: 
\begin{equation}\label{eq:normalization_p}
    \tilde{p}_k^{[i]} = 2 \frac{p_k^{[i]} - \underline{p}^{[i]}}{\bar{p}^{[i]} - \underline{p}^{[i]}} - 1, \quad \tilde{p}_k^{[i]} \in [-1,1],
\end{equation} 
with $\underline{p}^{[i]}$ and $\bar{p}^{[i]}$ being the lower and upper bounds of the $i$-th component $p^{[i]}$, for $i=1=1,\ldots,n_\textrm{p}$. 
\begin{rem}[Normalization range]
    The normalization in~\eqref{eq:normalization_p} implies the knowledge of the lower and upper bounds of the scheduling signal. While assuming the availability of such bounds is common in the LPV framework, they can also be estimated from available data. However, using estimated bounds might lead to an ineffective normalization procedure, in turn causing the elimination of rows in $Z_P$ and $U_F$ that might be relevant to predict the system's output. At the same time, as discussed in \cite{Verdult:02a}, neglecting multiplications of the scheduling variables ultimately implies that one neglects high-order dynamics in $p$, which a limited number of data can in any case describe with limited accuracy.
\end{rem}

\begin{table}[!tb]
    \caption{Total number of rows of $Z_P$ when $n_\textrm{p} = n_\textrm{u} = n_\textrm{y} = 1$ without row reduction (Full) and with the reduced number of rows obtained by setting $h_Z = 3$.}
    \label{tab:a_priori_reduction}
    \centering
    \begin{tabular}{|l|c|c|c|c|}
        \cline{2-5}
         \multicolumn{1}{c|}{} & \diagbox{$\bm{M}$}{$\bm{T}$}& $\bm{5}$ & $\bm{10}$ & $\bm{20}$\\
         \hline 
        Full & \multirow{2}{*}{$\bm{5}$} & $1.9 \cdot 10^{4}$ & $6.2 \cdot 10^{5}$ &  $6.3 \cdot 10^{8}$\\
        \cline{1-1}\cline{3-5}
        Reduced & & $4.1 \cdot 10^{2}$ & $9.6 \cdot 10^{2}$ & $2.8 \cdot 10^{3}$\\
        \hline
        Full & \multirow{2}{*}{$\bm{10}$} & $4.7 \cdot 10^{6}$ & $1.5 \cdot 10^{8}$ &  $1.5 \cdot 10^{11}$\\
        \cline{1-1}\cline{3-5}
        Reduced & & $1.4 \cdot 10^{3}$ & $2.8 \cdot 10^{3}$ & $7.0 \cdot 10^{3}$\\
        \hline
    \end{tabular}
    %\begin{tabular}{c|ccc}
    %    & $\bm{T=5}$ & $\bm{T=10}$ & $\bm{T=20}$ \\
    %    \hline 
    %    \multirow{2}{*}{$\bm{M=5}$} & 19360 & 619520 & $6.3 \cdot 10^{8}$ \\
    %    & (405) & (955) & (2805) \\ \hline
    %    \multirow{2}{*}{$\bm{M=10}$} & $4.7 \cdot 10^{6}$ & $1.5 \cdot 10^{8}$ & $1.5 \cdot 10^{11}$ \\
   %     & (1410) & (2760) & (6960) 
   % \end{tabular}
\end{table}

As the normalization leads to an LPV-SS equivalent to the original one, we can use the dataset with normalized scheduling, i.e.,
\begin{equation}
    \tilde{\mathfrak{D}}_{N_{\textrm{data}}} = \{u_k, y_k, \tilde{p}_k \}_{k=0}^{N_\textrm{data}-1},
\end{equation}
to construct the data-driven predictor with reduced rows. Thanks to the normalization of the scheduling, a reasonable approximation of the system dynamics thus consists of removing the rows from $Z_P$ and $U_F$ that contain more than a given number of multiplications among the scheduling signal's components. This procedure is guided by the (user-defined) minimum number of products $h_Z \in \mathbb{N}$ ($h_U \in \mathbb{N}$) between scheduling variables required for removing a row of $Z_P$ ($U_F$). Note that choosing small $h_Z$, $h_U$ trades off accuracy for computational tractability, up to the extreme case $h_Z = h_U = 1$ (i.e., no row containing the scheduling variable), where the obtained reduced-order predictor becomes LTI. This trade-off is exemplified in \tablename{~\ref{tab:a_priori_reduction}}, highlighting how this simple reduction strategy allows one to use larger past and future horizons, which is generally not possible with other data-driven predictive LPV control techniques (see, e.g., the discussion in \cite{Verhoek:25a}).

\subsection{Keeping \textquotedblleft relevant\textquotedblright \ rows only}
Since neglecting high-order scheduling dynamics leads to Hankel matrices that may still contain thousands of rows (see \tablename{~\ref{tab:a_priori_reduction}}), we propose to further reduce $Z_P$ and $U_F$ by discarding additional rows based on their \textquotedblleft relevance\textquotedblright \ for describing $Y_F$. To define relevance, we first characterize the approximation error in reconstructing $Y_F$ achieved when using only \textquotedblleft relevant\textquotedblright \ rows. 

Let us reorder the rows of the Hankel matrix obtained stacking $
Z_P$ and $U_F$ to separate $\Omega_\textrm{rel}$, containing the most relevant rows of $
Z_P$ and $U_F$ for the reconstruction of $Y_F$, and $\Omega_\textrm{irrel}$, which comprises all the remaining ones. We can then perform an LQ decomposition of the resulting matrix, i.e.,
\begin{equation} \label{eq:full_LQ}
    \begin{bmatrix}
        \Omega_\textrm{rel} \\ \Omega_\textrm{irrel} \\ Y_F
    \end{bmatrix} = \begin{bmatrix}
        L_{11}^{\Omega} & 0 & 0 \\
        L_{21}^{\Omega} & L_{22}^{\Omega} & 0 \\
        L_{31}^{\Omega} & L_{32}^{\Omega} & L_{33}^{\Omega}
    \end{bmatrix} \begin{bmatrix}
        Q_1^{\Omega} \\ Q_2^{\Omega} \\ Q_3^{\Omega}
    \end{bmatrix},
\end{equation}
which is instrumental to formalize the approximation error resulting from discarding all the non-relevant rows $\Omega_\mathrm{irrel}$ as follows.
\begin{lem}[Approximation error by row reduction]
    The mismatch between $Y_F$ and $Y_{F, \mathrm{rel}}=L_{31}^\Omega Q_{1}^\Omega+L_{33}^\Omega Q_{3}^\Omega$ is 
    \begin{equation}
        \lVert Y_{F} - Y_{F, \mathrm{rel}} \rVert_F = \lVert L_{32}^\Omega \rVert_F. 
    \end{equation}
\end{lem}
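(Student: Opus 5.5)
The plan is to read off the last block row of the LQ decomposition \eqref{eq:full_LQ}. It gives the exact identity
\begin{equation*}
Y_F = L_{31}^\Omega Q_1^\Omega + L_{32}^\Omega Q_2^\Omega + L_{33}^\Omega Q_3^\Omega .
\end{equation*}
Subtracting the definition of $Y_{F,\mathrm{rel}}$, the mismatch reduces to the single term $Y_F - Y_{F,\mathrm{rel}} = L_{32}^\Omega Q_2^\Omega$. The whole argument therefore comes down to evaluating $\lVert L_{32}^\Omega Q_2^\Omega\rVert_F$.

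The second step uses the orthonormality of the $Q$-factor. In an LQ decomposition, the stacked matrix $[Q_1^{\Omega\top}\; Q_2^{\Omega\top}\; Q_3^{\Omega\top}]^\top$ has orthonormal rows, so $Q_2^\Omega Q_2^{\Omega\top} = I$. Then
\begin{equation*}
\lVert L_{32}^\Omega Q_2^\Omega \rVert_F^2 = \mathrm{tr}\!\left(L_{32}^\Omega Q_2^\Omega Q_2^{\Omega\top} L_{32}^{\Omega\top}\right) = \mathrm{tr}\!\left(L_{32}^\Omega L_{32}^{\Omega\top}\right) = \lVert L_{32}^\Omega\rVert_F^2 .
\end{equation*}
This holds because the Frobenius norm is invariant under right multiplication by a matrix with orthonormal rows. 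Taking square roots gives the claim.

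The only point needing care, which I expect to be the main (mild) obstacle, is to justify that the block $Q_2^\Omega$ indeed has orthonormal rows of the right dimension. This depends on $\Omega_\mathrm{irrel}$ contributing full row rank, i.e., on Assumption~\ref{assum:persistence_excitation} applied to the reordered stack of $Z_P$ and $U_F$; a row reordering does not affect rank. If rank deficiency occurred, I would use the thin LQ factorization, in which $Q_2^\Omega$ keeps orthonormal rows and $L_{32}^\Omega$ has correspondingly fewer columns, and the same trace computation goes through unchanged.

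It is also worth noting the interpretation of the result. The term $L_{31}^\Omega Q_1^\Omega$ is the projection of $Y_F$ onto the row span of $\Omega_\mathrm{rel}$. The term $L_{33}^\Omega Q_3^\Omega$ is the residual orthogonal to all regressors. Hence $\lVert L_{32}^\Omega\rVert_F$ measures exactly the part of $Y_F$ explained only by the discarded rows.
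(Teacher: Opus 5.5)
Your proposal is correct and follows the same route as the paper: read off $Y_F - Y_{F,\mathrm{rel}} = L_{32}^\Omega Q_2^\Omega$ from the last block row of the LQ factorization, then use the orthonormal rows of $Q_2^\Omega$ to get $\lVert L_{32}^\Omega Q_2^\Omega\rVert_F = \lVert L_{32}^\Omega\rVert_F$. Your rank caveat is harmless but not really needed, because the $Q$-factor of an LQ factorization of a wide data matrix has orthonormal rows by construction.
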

\begin{pf}
    The proof follows from the fact that $Q_{2}^\Omega$ is orthonormal and $Y_F=L_{31}^\Omega Q_{1}^\Omega+L_{32}^\Omega Q_{2}^\Omega+L_{33}^\Omega Q_{3}^\Omega$. $\hfill \blacksquare$
\end{pf}
According to this result, we quantify the relevance, for example, of the $k$-th row of $Z_P$, namely $Z_P[k,:]$, by checking the residual 
\begin{equation}\label{eq:residual_equation}
    r_k = \min_\psi \left \lVert Y_F - \psi Z_P[k,:] \right \rVert_F^2.
\end{equation}
Since rows with small $r_k$ contribute more to $Y_F$, we can then use this criterion for \emph{iterative}, relevance-based row removal, as summarized\footnote{The code for the algorithms to reproduce our results is publicly available at \url{https://github.com/fepor99/LPV-gamma-DDPC}. } in Algorithms~\ref{alg:residuals},~\ref{alg:row_selection}. Algorithm~\ref{alg:residuals} is the outer reduction routine. It first forms the candidate sets of rows of $Z_P$ and $U_F$ by discarding high-order scheduling terms according to $h_Z$ and $h_U$. Then, it selects $n_{Z_P}$ rows from $Z_P$ and $n_{U_F}$ rows from $U_F$, while constructing the reduced LQ factor $L^{\rm red}$. Note that $n_{Z_P}$ and $n_{U_F}$ are user-chosen hyperparameters. At each row selection step, Algorithm~\ref{alg:residuals} calls the subroutine \textsc{SelR} in Algorithm~\ref{alg:row_selection}. This subroutine computes the residuals associated with the current candidate rows (line~\ref{alg:row_selection:recursive} in Algorithm~\ref{alg:row_selection}), selects the row $k^*$ yielding the smallest residual, and returns the corresponding Householder transformation $H_k^\star$ used to remove its contribution from the remaining data (line~\ref{alg:residuals:householder} of Algorithm~\ref{alg:row_selection}). Thus, the procedure alternates between selecting the most relevant row, projecting the data onto the orthogonal complement of the selected row, and updating the LQ factorization. 
Note that, before the relevance analysis, we select all rows of $Z_P$ and $U_F$ that are not multiplied by the scheduling signal (see line~\ref{alg:residuals:indipendent} of Algorithm~\ref{alg:row_selection}). This initial step forces the predictor to always contain the LTI portion of the dynamics.

\begin{rem}[Selecting $n_{Z_P}$ and $n_{U_F}$]\label{rem:SVD}
    A heuristic to select $n_{Z_P}$ and $n_{U_F}$ consists of analyzing the residuals (e.g., \eqref{eq:residual_equation} for $Z_P$) of Algorithm~\ref{alg:residuals}. As progressively more rows of $Z_P$ and $U_F$ are selected and sorted by relevance, one can inspect the residual decrease as a function of the number of selected rows. Similar to other model-reduction approaches, e.g., SVD~\cite{VanOverschee:94a,VanOverschee:96a}, one can then determine by inspection a cut-off value for $n_{Z_P}$ and $n_{U_F}$, trading off loss of accuracy and computational complexity.
\end{rem}

\algrenewcommand\algorithmicindent{0.3em}
\begin{algorithm}[t]
    \caption{Reduced-order LQ decomposition}\label{alg:residuals}
    \begin{algorithmic}[1]
        % Inputs
        \Require Dataset $\mathfrak{D}_{N_{\textrm{data}}}$; past horizon $M$; future horizon $T$; desired number of rows $n_{Z_P}$ and $n_{U_F}$; threshold on high-order elements $h_Z$ and $h_U$.
        
        % Outputs
        \Ensure Reduced-order LQ decomposition matrix $L^{\mathrm{red}}$; sets of selected row indices $\mathbb{I}_Z$ and $\mathbb{I}_U$.\vskip 3pt 
        \hrule \vskip 3pt
        
        % Start code
        \Statex \textbf{Initialization}
        \State $Y \gets Y_F$, \; $\mathbb{I}_Z \gets \emptyset$, \; $\mathbb{I}_U \gets \emptyset$, \; $L^{\mathrm{red}} \gets 0$; 
        \State $\varrho_0=0$, \; $\mathcal{H} \gets I$;
        \State \textbf{compute} the set $\mathbb{H}_Z$ of row indices of $Z_P$  with less than $h_Z$-order terms in $p$; 
        \State \textbf{compute} the set $\mathbb{H}_U$ of row indices of $U_F$ with less than $h_U$-order terms in $p$;

        % Select from ZP
        \Statex \textbf{Order reduction and LQ decomposition}
        \For {$j = 1, \ldots, n_{Z_P}$} \Comment{Choose from $Z_P$}
            \State $\{k^\star, K, \varrho_j,H_j^\star\} \gets \mathrm{\textsc{SelR}}(\mathbb{H}_Z,\mathcal{H},Z_P,Y,j,\varrho_{j-1})$;
            \State Remove $k^\star$ from $\mathbb{H}_Z$ and add it to $\mathbb{I}_Z$; 
            % $\mathbb{H}_Z \gets \mathbb{H}_Z \setminus \{k^\star\}$; $\mathbb{I}_Z \gets \mathbb{I}_Z \cup \{k^\star\}$;
            \State $Y \!\gets\! Y H_j^\star$, \; $\mathcal{H} \!\gets\! \mathcal{H} H_j^\star$, \; $L^{\mathrm{red}}[j,:] \!\gets\! K H_j^\star$;
        \EndFor

        % Select from UF
        \For {$i = 1, \ldots, n_{U_F}$} \Comment{Choose from $U_F$}
            \State $j \gets n_{Z_P} + i$; 
            \State $\{k^\star, K, \varrho_j,H_j^\star\} \gets $ \textsc{SelR}($\mathbb{H}_U, \mathcal{H}, U_F, Y, j$, $\varrho_{j-1}$);
            \State Remove $k^\star$ from $\mathbb{H}_U$ and add it to $\mathbb{I}_U$; 
            % \State $\mathbb{H}_U \gets \mathbb{H}_U \setminus \{k^\star\}$; $\mathbb{I}_U \gets \mathbb{I}_U \cup \{k^\star\}$;
            \State $Y \!\gets\! Y H_j^\star$, \; $\mathcal{H} \!\gets\! \mathcal{H} H_j^\star$, \; $L^{\mathrm{red}}[j,:] \!\gets\! K H_j^\star$;
        \EndFor

        % LQ decomposition of YF
        \For {$\ell = 1, \ldots, T n_\mathrm{y}$} \Comment{LQ decomposition of $Y_F$}
            \State $j \gets n_{Z_P} + n_{U_F} + \ell$;
            \State $K \gets\! Y[\ell, :]$;
            \State \textbf{find} the Householder matrix $H_j^\star$ and $S\in\mathbb{R}^{1\times j}$ such that $K H_j^\star = \begin{bsmallmatrix}
                S & 0
            \end{bsmallmatrix}$;
            \State $Y \!\gets\! Y H_j^\star$, \; $\mathcal{H} \!\gets\! \mathcal{H} H_j^\star$, \; $L^{\mathrm{red}}[j,:] \!\gets\! K H_j^\star$;
        \EndFor
    \end{algorithmic}
\end{algorithm}

\begin{algorithm}[t] 
    \caption{\textsc{SelR}($\mathbb{H}_X, \mathcal{H}, X, Y, j, \varrho_{j-1}$)} \label{alg:row_selection}
    \begin{algorithmic}[1]
    % Inputs
        \Require Candidate row set $\mathbb{H}_X$; projection matrix $\mathcal{H}$; data matrix $X$; output matrix $Y$; iteration $j$; previous residual $\varrho_{j-1}$.
        % Outputs
        \Ensure Selected row index $k^\star$; projected selected row $K$; updated residual $\varrho_j$; Householder matrix $H_{j^\star}$. \vskip 3pt 
        \hrule \vskip 3pt
        \State $\mathbb{H}_X^0 \leftarrow \{k\in \mathbb{H}_X : X[k,:]$ is independent of $p\}$;
        \If {$\mathbb{H}_X^0\neq\emptyset$} \label{alg:residuals:indipendent}
            \State $\mathbb{K}_X \leftarrow \mathbb{H}_X^0$; \Comment{select from rows independent of $p$} 
        \Else 
            \State $\mathbb{K}_X \leftarrow \mathbb{H}_X$;
        \EndIf
        \For {$k \in \mathbb{K}_X$}
            \State $\mathcal{X}_{k} \gets X[k,:] \mathcal{H}$;
            \State \textbf{find} the Householder matrix $H_k$ and $S\in\mathbb{R}^{1\times j}$ such that $\mathcal{X}_{k} H_k = \begin{bsmallmatrix}
                S & 0
            \end{bsmallmatrix}$; \label{alg:residuals:householder}
            \State $r_{k} \gets \begin{cases}
                \lVert Y[:,j\!:\!N] \, H_k[:, 2\!:\!N\!-\!j\!+\!1] \rVert_F^2 & \textrm{if } j=1; \\
                \varrho_{j-1} - \lVert Y[:,j\!:\!N] \, H_k[:,1] \rVert_2^2 & \textrm{if } j>1;
            \end{cases}$ \label{alg:row_selection:recursive}
        \EndFor
        \State $k^\star \gets \arg \min_k r_{k}$;
        \State $\varrho_{j} \gets \min_k r_{k}$;
        \State $K \gets \mathcal{X}_{k^\star}$;
        \State $H_j^\star \gets H_{k^\star}$; 
    \end{algorithmic}
\end{algorithm}

\section{Reduced-complexity LPV $\gamma$-DDPC and its comparison with existing approaches}\label{sec:reduced_gamma}
By leveraging the indices of the selected rows of $Z_P$ and $U_F$, described by the sets $\mathbb{I}_Z$ and $\mathbb{I}_U$, determined by Algorithm~\ref{alg:residuals}, we can now build the reduced-complexity version of $\gamma$-DDPC for LPV systems. Specifically, let $\bar{z}^P$ denote the subvector of $z^P$ in \eqref{eq:single_predictor} containing the entries whose indices belong to $\mathbb{I}_Z$. Using $L^{\mathrm{red}}$ matrix returned by Algorithm~\ref{alg:residuals}, we can then find $\gamma_1^\star$ as
\begin{equation}\label{eq:gamma_1_star_red}
    \gamma_1^\star = (L_{11}^{\mathrm{red}})^{-1} \bar z^P \in \mathbb{R}^{n_{Z_{P}}}.
\end{equation}
Accordingly, by denoting the subvector of $u^F$ in \eqref{eq:future_prediction} containing its entries with indices in $\mathbb{I}_U$ as $\bar{u}^F$, we can formulate the reduced-complexity $\gamma$-DDPC problem as:
\begin{equation}\label{eq:reduced_gammaDDPC}
    \begin{aligned}
        \min_{\gamma_2, \gamma_3} &\sum_{k=0}^{T-1} \lVert y_k^F \!-\! y_{k}^{r} \rVert_Q^2 \!+\! \lVert u_k \rVert_R^2 \!+\! \beta_2 \lVert \gamma_2 \rVert_2^2 \!+\! \beta_3 \lVert \gamma_3 \rVert_2^2, \\
        \text{s.t.} & \quad \begin{bmatrix}
            \bar u^F \\ y^F
        \end{bmatrix} \!\!=\!\! \begin{bmatrix}
            L_{21}^{\mathrm{red}} & L_{22}^{\mathrm{red}} & 0 \\
            L_{31}^{\mathrm{red}} & L_{32}^{\mathrm{red}} & L_{33}^{\mathrm{red}}
        \end{bmatrix} \begin{bmatrix}
            \gamma_1^\star \\ \gamma_2 \\ \gamma_3
        \end{bmatrix}, \\
        & \quad u_k \in \mathbb{U}, \; y_k^F \in \mathbb{Y}. 
    \end{aligned}
\end{equation}
Note that the optimization variable $\gamma_2 \in \mathbb{R}^{n_{U_F}}$, as well as $\gamma_1^\star$ have now a size dependent on the user-defined choices of $n_{Z_{P}}$ and $n_{U_{F}}$, whose tuning is key (as already remarked) to obtain a predictive control problem that is computationally tractable for real-time implementation.

\begin{table*}[!tb]
    \caption{Qualitative comparison of reduced-complexity $\gamma$-DDPC with existing LPV data-driven predictive control approaches. We evaluate whether they can handle noise, exploit consistent predictors, and assess their online computational complexity. IO stands for Input-Output, while CE denotes the use of the Certainty Equivalence principle~\cite{Astrom:95a,Hjalmarsson:05a}.}
    \label{tab:compairson_algos}
    \centering
    \begin{tabular}{|l|c|c|c|c|}
        \cline{2-5}
         \multicolumn{1}{c|}{} & \textbf{Noise} & \textbf{Consistency} & \textbf{Requirements} & \textbf{Online Complexity} \\
         \hline
         Polytopic DeePC~\cite{BouHamdan:24a} & $\textcolor{red!70!black}{\bm{\times}}$ & $\textcolor{red!70!black}{\bm{\times}}$ & Scheduling signal's controllability & \textcolor{orange!90!black}{Moderate}\\ 
         \hline
         LPV-IO-DPC~\cite{Verhoek:25a} & $\textcolor{green!70!black}{\bm{\checkmark}}$ & $\textcolor{orange!90!black}{\bm{\sim}}$ (slacks) & Existence of shifted-affine IO realization & \textcolor{red!70!black}{High} \\ 
         \hline
         SPC LPV~\cite{Dong:09a} & $\textcolor{orange!90!black}{\bm{\sim}}$ (CE) & $\textcolor{green!70!black}{\bm{\checkmark}}$ & Existence of affine SS realization & \textcolor{green!70!black}{Low}\\ 
         \hline
          Reduced $\gamma$-DDPC \textbf{(ours)} & $\textcolor{green!70!black}{\bm{\checkmark}}$ & $\textcolor{red!70!black}{\bm{\times}}$ & Existence of affine SS realization & \textcolor{orange!90!black}{Reduced}\\ 
         \hline
    \end{tabular}
\end{table*}

As summarized in \tablename{~\ref{tab:compairson_algos}}, this computational efficiency is one of the advantages of the proposed scheme over existing approaches for data-driven LPV predictive control, together with the relaxed requirements on the controlled system and the capability of our approach to cope with noisy setting. The latter is indeed one of the main distinctive features of our method with respect to the polytopic DeePC in \cite{BouHamdan:24a}. Such an approach relies on the assumption that the scheduling signal belongs to a known scheduling region $\mathbb{P}$ and, unlike ours, considers a noise-free setting. This hypothesis allows for the representation of the LPV predictor as a convex combination of as many frozen LTI predictors as the vertices of $\mathbb{P}$. Accordingly, polytopic DeePC does not hinge on the prediction of the future scheduling trajectory, which is advantageous when the scheduling signal is exogenous or unmeasurable, yet conservative whenever the future scheduling trajectory is not constant or an approximate prediction of the scheduling signal is available. Moreover, unlike us, \cite{BouHamdan:24a} requires collecting a distinct dataset for each vertex of the scheduling polytope $\mathbb{P}$. This implies that the scheduling signal must be controllable and held constant during each data-collection experiment. Such an assumption can be restrictive in many practical applications and is likely infeasible when the scheduling signal depends on the system's inputs and/or outputs. This requirement does not characterize the extensions of data-enabled predictive control (DeePC)~\cite{Coulson:19a} to LPV systems, including the LPV-IO-DPC formulation in \cite{Verhoek:25a}. Nonetheless, these approaches require the LPV system \eqref{eq:LPV-SS-original} to admit a shifted-affine LPV input–output realization, which is more restrictive compared to our (non-minimal) affine LPV-SS assumption. Moreover, they are conceived for a noise-free setting. The latter limitation is overcome by the heuristic introduction of slack variables to handle noise by softening the control problem's constraints. 

Meanwhile, both LPV-IO-DPC and LPV $\gamma$-DDPC rely on the knowledge of the future scheduling trajectory, yet such information is used in a completely different way within the two approaches. In particular, while in LPV-IO-DPC the predicted scheduling is explicitly embedded in the predictor by enforcing structural consistency constraints, we use the predicted scheduling only to define $\bar{z}^P$ in \eqref{eq:gamma_1_star_red}. As a consequence, our reduced approach does not guarantee structural consistency of the predicted inputs with \eqref{eq:definition-of-uF}, with the benefit of (\emph{i}) not enforcing potentially inaccurate scheduling information when the scheduling is inferred from data, and (\emph{ii}) leading to a scheme that is computationally more advantageous than LPV-IO-DPC.  

Our approach can also be connected to subspace predictive control and, specifically, to its LPV version (SPC LPV) introduced in \cite{Dong:09a}. As for LPV $\gamma$-DDPC, SPC LPV relies on the steps carried out in Section~\ref{sec:data_to_predictions} to obtain the predictor, yet estimating the matrices $\Gamma_z$ and $\mathcal{H}_\textrm{d}$ in \eqref{eq:asymptotic_predictor} and using them to have a certainty equivalent predictor, rather than exploting a data-driven predictor. This choice allows SPC LPV to use a predictor coherent with the Kronecker structure in \eqref{eq:definition-of-uF}, but neglects potential predictions errors, which LPV $\gamma$-DDPC can instead account for through $\gamma_3$ as for the LTI case (see \cite{Breschi:23b,Mattsson:24a}). 

\section{Sensitivity analysis: a numerical study} \label{sec:hyperparameter_analysis}
LPV $\gamma$-DDPC requires the selection of several hyperparameters, among which the number of rows $n_{Z_P}$ and $n_{U_F}$ dictating the complexity of the predictor, as well as the minimum number of products $h_Z$ and $h_U$ between scheduling signal at possibly different time instants to avoid constructing superfluous rows in the Hankel data matrices. We now analyze the sensitivity of LPV $\gamma$-DDPC to these parameters, providing a set of practical guidelines for their selection. To this end, we consider the same $4$-th order LPV system with $2$ inputs, $3$ outputs, and $3$-dimensional exogenous scheduling considered in \cite{Verdult:02a}. In all tests, we set ourselves in the same setting used therein (i.e., the innovation is a zero-mean, white noise sequence with a Gaussian distribution and $\sigma_e^2=1$), using exactly the same data collection procedure established therein to gather $N_\textrm{data}=2000$ input/output/scheduling samples starting from zero initial conditions. Our sensitivity analysis is carried out by closing the loop with LPV $\gamma$-DDPC, imposing the control inputs to be bounded to the set $\mathbb{U} = [-20,20] \times [-20,20]$, and steering the first and second outputs to track step reference signals. To do so, we set $Q = \textrm{diag}(10,1,0)$, $R = 0.01I$, $\beta_2 = 0.1$, and $\beta_3 = 0$ in the cost of \eqref{eq:reduced_gammaDDPC}, while we estimate the future scheduling using the GS approach. Note that the future scheduling already lies in $[-1,1]$ and is thus not normalized.

In our analysis, performance for varying values of the parameters of interest are quantitatively evaluated by looking at 
\begin{equation}\label{eq:indexes}
        J_u = \sum_{k=0}^{T-1} \lVert u_k \rVert_2^2, \quad
        J_y = \sum_{k=0}^{T-1} \left\lVert \frac{y_k - y_{k}^{r}}{y_{k}^{r}} \right\rVert_2^2, 
\end{equation}
evaluating the control effort and the tracking performance, respectively. The data collection, controller synthesis, and performance evaluation are performed over 30 Monte Carlo (MC) runs.

%\begin{equation}
\begin{figure*}[!tb]
    \centering
    \includegraphics[width=0.9\linewidth]{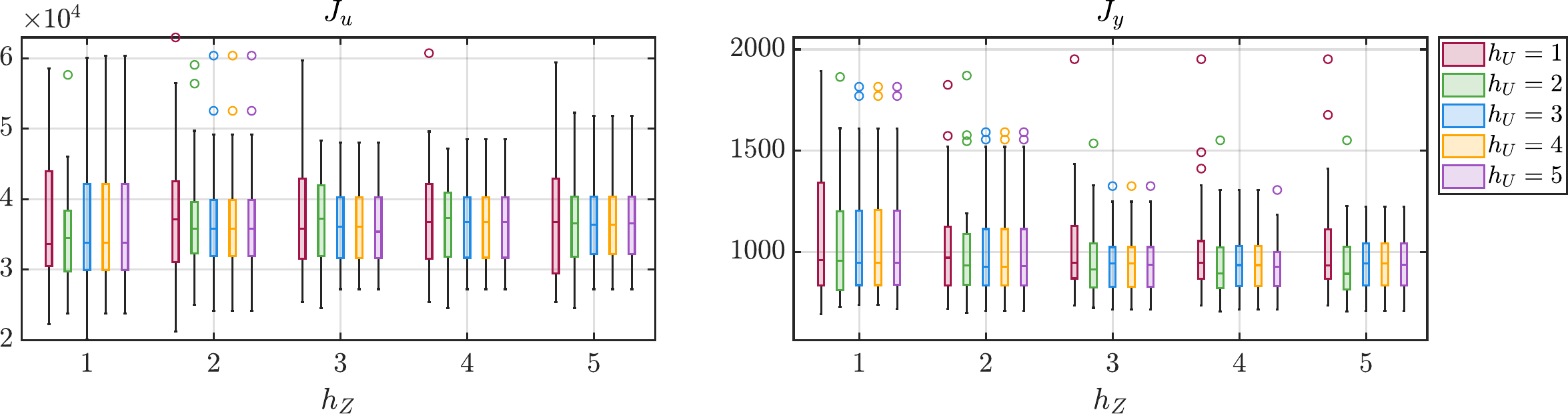}
    \caption{Performance metrics in \eqref{eq:indexes} over 30 MC runs for different values of $h_Z$ and $h_U$, with $n_{Z_P} = 30$ and $n_{U_F} = 20$.}
    \label{fig:sensitivity_hZ_hU}
\end{figure*}
\subsection{Sensitivity to higher-order dependency exclusion}
We start by fixing $n_{Z_P} = 30$ and $n_{U_F} = 20$ to analyze the sensitivity with respect to $h_{Z}, h_{U} \in \{1,2,3,4,5\}$. Note that, for higher values of $h_Z$ ($h_U$), the rows remaining after the reduction might be less than $n_{Z_P}$ ($n_{U_F}$). As shown in \figurename{~\ref{fig:sensitivity_hZ_hU}} (right panel), tracking performance is rather sensitive to the choice of $h_{Z}$  and $h_U$ when they are both low, achieving the worse value when the LPV dynamics are not considered, i.e., $h_Z = h_U = 1$. Nonetheless, it reaches a plateau around $h_Z = h_U = 3$, after which tracking performance remains almost invariant irrespective of the choice of these parameters. Similar conclusions can be drawn for $J_u$ (see the left panel of \figurename{~\ref{fig:sensitivity_hZ_hU}}), with the exception that input effort seems to be worse for $h_U=1$ and $h_Z=3$ or $h_Z=5$. Nonetheless, the plateau achieved in both cases for higher values of $h_Z$ and $h_U$ support our intuition that rows containing several products between the scheduling variable at (possibly) different time instants could be removed and still achieve satisfactory control performance in terms of control effort and tracking error.\\  
Concurrently, as shown in \tablename{~\ref{tab:synthesis_time}}, the time required to select the most relevant rows increases with both $h_Z$ and $h_U$. Hence, sticking to small values of $h_Z$ and $h_U$ makes the row selection more computationally efficient while not excessively deteriorating performance. 

\begin{table}[!tb]
    \caption{Average computational time (in [s]) to select the most relevant rows for different values of $h_Z$ and $h_U$ over $30$ Monte Carlo runs, with $n_{Z_P} = 30$ and $n_{U_F} = 20$.} \label{tab:synthesis_time}
    \centering
    \begin{tabular}{|c|c|c|c|c|c|}
        \hline 
        \diagbox{$\bm{h_U}$}{$\bm{h_Z}$}& $\bm{1}$ & $\bm{2}$ & $\bm{3}$ & $\bm{4}$ & $\bm{5}$ \\
        \hline 
        $\bm{1}$ & 2.03 & 3.09 & 6.62 & 30.43 & 234.59 \\
        \hline
        $\bm{2}$ & 2.44 & 3.92 & 7.37 & 31.06 & 235.28 \\
        \hline
        $\bm{3}$ & 2.91 & 4.83 & 8.24 & 32.14 & 236.31 \\
        \hline
        $\bm{4}$ & 4.19 & 6.05 & 9.84 & 34.07 & 237.94 \\
        \hline
        $\bm{5}$ & 6.04 & 8.05 & 11.78 & 38.23 & 240.80\\
        \hline
    \end{tabular}
\end{table}

\subsection{Sensitivity to number of rows}
By relying on the previous results, we now set $h_Z = h_U = 3$ and study the sensitivity of the achieved performance to the number of rows $n_{Z_P} \in \{15, 20, 25, 50, 100, 150 \}$ and $n_{U_F} \in \{ 10, 20, 30, 50 \}$ used to build the predictor. As highlighted in \figurename{~\ref{fig:sensitivity_n_ZP}}, when the number of selected rows is too small, e.g., $n_{Z_P}=15$ or $n_{U_F} = 10$, the predictor contains too little (if any) information about the LPV structure of the controlled system, leading to a deterioration in tracking performance (right panel) along with a general increase in control effort (left panel). At the same time, when an excessive number of rows is maintained, e.g., $n_{Z_P}=150$, the predictor starts fitting noise dynamics, and the reference signal is not tracked anymore. This result highlights a trade-off that should be achieved in tuning these parameters, with $n_{Z_P}$ and $n_{U_F}$ that should be large enough for the predictor to characterize all relevant plant dynamics, but not too large to avoid fitting the noise. \\
Following Remark~\ref{rem:SVD}, we then propose a heuristic to determine a suitable cut-off value for these two critical parameters by looking at the residual error in \eqref{eq:residual_equation}. Specifically, we first select large values for both $n_{Z_P}$ and $n_{U_F}$, plot the decrease in the prediction residuals as we select the relevant rows, and later cut when we recognize a knee in the residual reduction. For our example, we obtain the result shown in \figurename{~\ref{fig:prediction_residuals}}, from which we recognize that the residuals decrease shows a knee after 25 rows are taken from $Z_P$, and after 20 rows are taken from $U_F$. Therefore, we expect to be able to achieve satisfactory tracking performance and control efforts by setting $n_{Z_P} = 25$ and $n_{U_F} = 20$. This is indeed confirmed by \figurename{~\ref{fig:sensitivity_n_ZP}}.\\
It is worth pointing out that this heuristic for the selection of $n_{Z_P}$ and $n_{U_F}$ can be applied before closing the loop, as it relies on residuals that can be computed offline solely based on the available batch data. 

\begin{figure*}[!tb]
    \centering
    \includegraphics[width=0.9\linewidth]{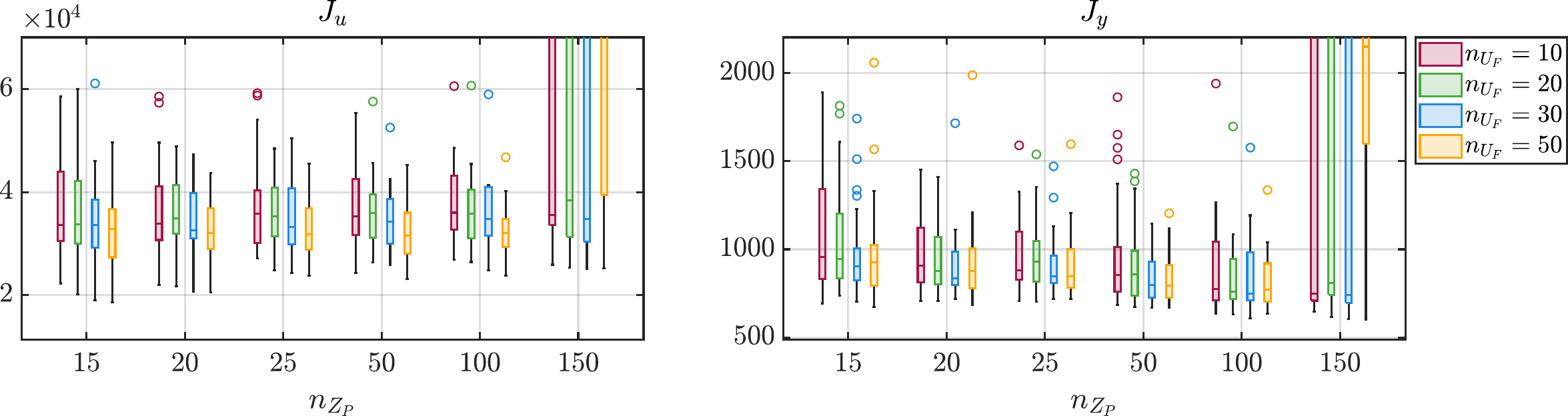}
    \caption{Performance metrics in \eqref{eq:indexes} over 30 MC runs for different values of $n_{Z_P}$ and $n_{U_F}$, for $h_Z = h_U = 3$.}
    \label{fig:sensitivity_n_ZP}
\end{figure*}

\begin{figure}[t]
    \centering
    \includegraphics[width=\linewidth]{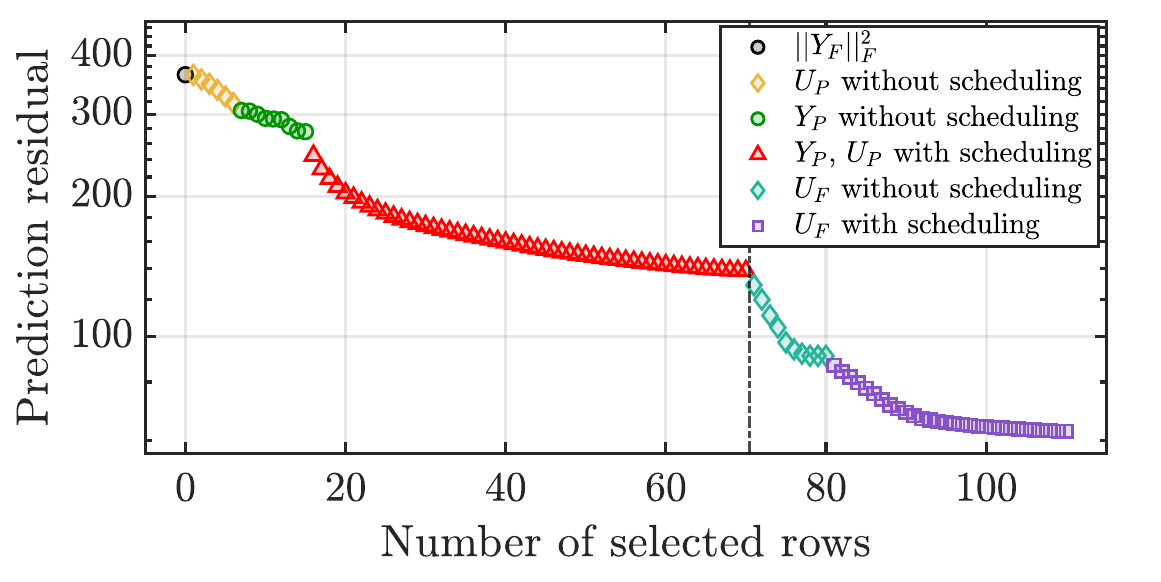}
    \caption{Prediction residuals \emph{vs} the number of selected rows. The vertical line divides the selection of the rows of $Z_P$ from those of $U_F$.}
    \label{fig:prediction_residuals}
\end{figure}

\section{A benchmark example: the unbalanced disk}\label{sec:unbalanced}
We now consider the unbalanced disk system proposed in \cite{Verhoek:25a} as a second example to showcase the performance of LPV $\gamma$-DDPC in controlling nonlinear systems. The discretized dynamics of the system to be controlled are
\begin{equation*}
    \begin{aligned}
        \theta_{k+1} & = \theta_k + T_s \omega_k, \\
        \omega_{k+1} & = \left( 1 - \frac{T_s}{\tau_m} \right) \omega_k + T_s \frac{mgl}{J} \sin(\theta_k) + T_s \frac{K_m}{\tau_m} u_k,\\
        y_k & = \theta_k + e_k,
    \end{aligned}
\end{equation*}
where $\theta$~[rad], $\omega$~[rad/s] are the angular position and speed of the disk, respectively, while $u$~[V] is the control voltage and $e$ is a zero-mean, Gaussian, white measurement noise with variance $\sigma_e^2$. The parameters (as well as their values and measurement units) are listed in \tablename{~\ref{tab:disk_parameters}}. While being nonlinear, this system can be equivalently represented as an LPV system (see~\cite{Verhoek:21a}), by using the scheduling map $\phi(\theta)=\textrm{sinc}(\theta)$. Accordingly, the scheduling variable is bounded to the set $\mathbb{P} = [-0.22, 1]$.

To assess the performance of LPV $\gamma$-DDPC, we perform 100 Monte Carlo simulations for data collection within the same setting used in \cite{Verhoek:25a}. In particular, for each Monte Carlo run, we collect $N_{\textrm{data}} = 89$ samples by applying a uniformly distributed input $u_k \sim \mathcal{U}(\mathbb{U})$ from an initial condition $\theta_0 = -\pi/4$, and $\omega_0 = 5$. For all tested predictive control schemes, we set the prediction horizon to $T = 20$, while we set $Q = 16$ and $R = 0.01$, and constrain the inputs and outputs of the system in the set $\mathbb{U} = [-10, 10]$ and $\mathbb{Y} = [-\pi, \pi]$, respectively.\\
In addition, for the LPV $\gamma$-DDPC controller, we select the regularization coefficients $\beta_2 = 0.03$ and $\beta_3=0$, while choosing $h_Z = h_U = 3$, $n_{Z_P} = 10$, and $n_{U_F} = 28$ following the heuristic on the prediction residuals decrease explained in the previous Section. Once again, while we control the system, we assume the future scheduling is constant, that is, we apply the GS approach.

\begin{table}[!tb]
        \caption{Parameters of the unbalanced disk, with $g=9.81$ [m/s$^2$] being the gravitational acceleration.}
    \label{tab:disk_parameters}
    \centering
    \begin{tabular}{|l|c|c|c|c|c|c|c}
        \hline
        \!\textbf{\!Param.\!}\! & \hspace*{-.1cm}$T_s$\hspace*{-.1cm} & \hspace*{-.1cm}$m$\hspace*{-.1cm} & \hspace*{-.1cm}$l$\hspace*{-.1cm} & \hspace*{-.1cm}$J$\hspace*{-.1cm} & \hspace*{-.1cm}$\tau_m$\hspace*{-.1cm} & \hspace*{-.1cm}$K_m$\hspace*{-.1cm} \\
        \hline 
        \!\!\textbf{Value} & \!0.01\! & \!0.076\! & \!0.041\! & \!$2.4\!\cdot\! 10^{-4}$\! & \!0.4\! & \!11\! \\
        \hline
        \!\!\textbf{Unit} & [s] & [kg] & [m] & [kg m\textsuperscript{2}] & [-] & [-] \\
        \hline
    \end{tabular}
\end{table}

\subsection{Comparison with the LPV-IO-DPC scheme} \label{subsec:comparison_with_LPV-IO-DPC}
We first study the performance of LPV $\gamma$-DDPC compared to the LPV-IO-DPC scheme\footnote{We set the hyperparameters of LPV-IO-DPC to $\lambda_g = 1$ and $\lambda_\sigma = 10^9$.} of \cite{Verhoek:25a}, both benchmarked against an oracle LPV MPC scheme with access to the true LPV dynamics, which employs a Kalman filter with weights $Q_\textrm{KF}=\textrm{diag}(10^{-4}, 10)$ and $R_\textrm{KF}=10^{-2}$ to estimate $\theta_k$ and $\omega_k$, and (like the data-driven predictive control approches) assumes the future scheduling signal to be constant over the prediction horizon.

First, we set $M=2$ and $\sigma_e = 2.5 \cdot 10^{-3}$. While generating a large Signal-to-Noise Ratio (SNR) of 60~[dB], these choices allow us to make a fair initial comparison with LPV-IO-DPC~\cite{Verhoek:25a}. In this low-noise scenario, the results attained in closed-loop are shown in \figurename{~\ref{fig:comparison_small_noise}}. These closed-loop outputs show that LPV $\gamma$-DDPC yields a closer response to the oracle solution on average, and a smaller standard deviation among Monte Carlo runs compared to the LPV-IO-DPC scheme.

We then increase the measurement noise by setting $\sigma_e = 10^{-2}$ (SNR of 46 [dB]), performing a new set of 100 MC runs, imposing $\beta_2=3$ in \eqref{eq:reduced_gammaDDPC} and setting the hyperparameters of LPV-IO-DPC a posteriori to minimize the number of unstable closed-loop trajectories\footnote{Accordingly, $\lambda_g = 0.6$.}. Across the 100 Monte Carlo runs, LPV-IO-DPC generated 21 unstable trajectories while LPV $\gamma$-DDPC never destabilized the system in closed-loop. In addition, as clear from \figurename{~\ref{subfig:large_noise_small_M}}, the tracking performance of LPV $\gamma$-DDPC are substantially better in average and standard deviation than that achieved throughout the 79 stable closed-loop instances with LPV-IO-DPC.    

We then evaluate whether performance with this higher level of noise in the data could be improved by increasing $M$. To this end, we consider a larger dataset $N_\textrm{data} = 120$ to guarantee persistence of excitation, and set $M=4$. The results of 100 MC closed-loop simulations, obtained by setting\footnote{The hyperparameter $\lambda_g$ of LPV-IO-DPC is set to $\lambda_g = 3$.} $\beta_2 = 0.5$, are shown in \figurename{~\ref{subfig:large_noise_large_M}}. Clearly, even for a larger past horizon, LPV $\gamma$-DDPC achieves better tracking performance and is closer to the oracle solution than LPV-IO-DPC. This advantage in performance is paired with one in computational complexity, as showcased in \tablename{~\ref{tab:simulation_time}}. Indeed, larger $M$ and $N_\textrm{data}$ increase the computational complexity of LPV-IO-DPC but does not affect LPV $\gamma$-DDPC, whose complexity depends solely on $n_{Z_P}$ and $n_{U_F}$. 
This highlights how our approach enables the use of larger past horizons compared to other DDPC methods for LPV systems.

\begin{figure}[t]
    \centering
    \includegraphics[width=\linewidth]{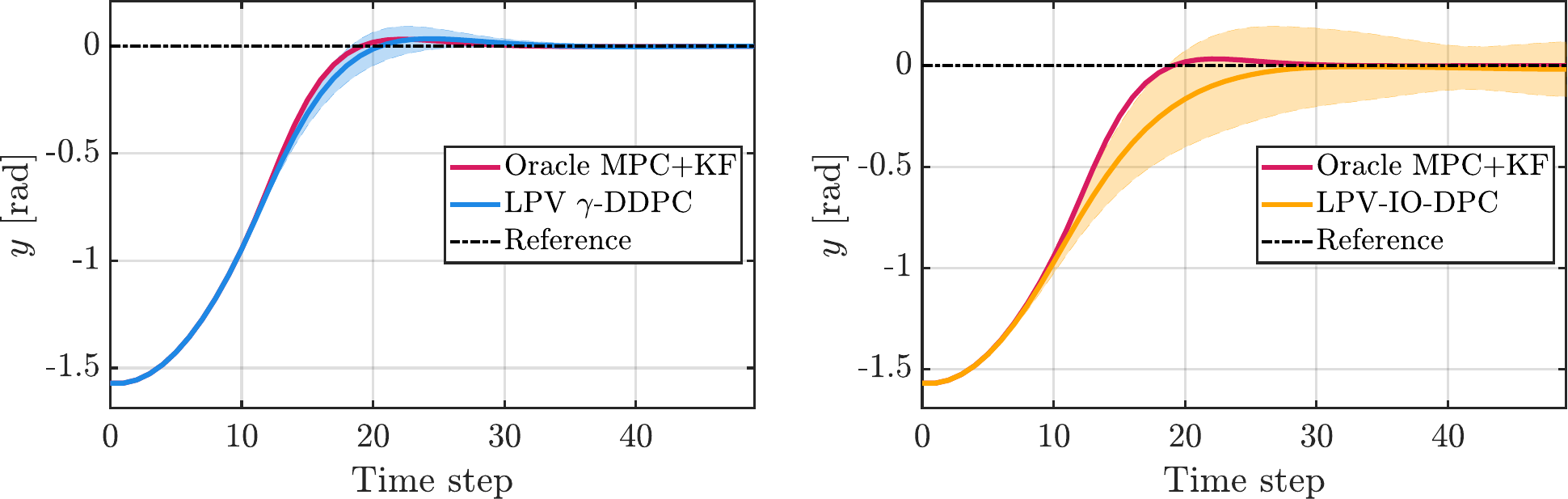}
    \caption{Unbalanced disk trajectories mean and standard deviation (shaded area) for a SNR of 60 [dB].}
    \label{fig:comparison_small_noise}
\end{figure}

\begin{figure}[t]
    \centering
    \begin{subfigure}{\linewidth}
        \centering
        \includegraphics[width=\linewidth]{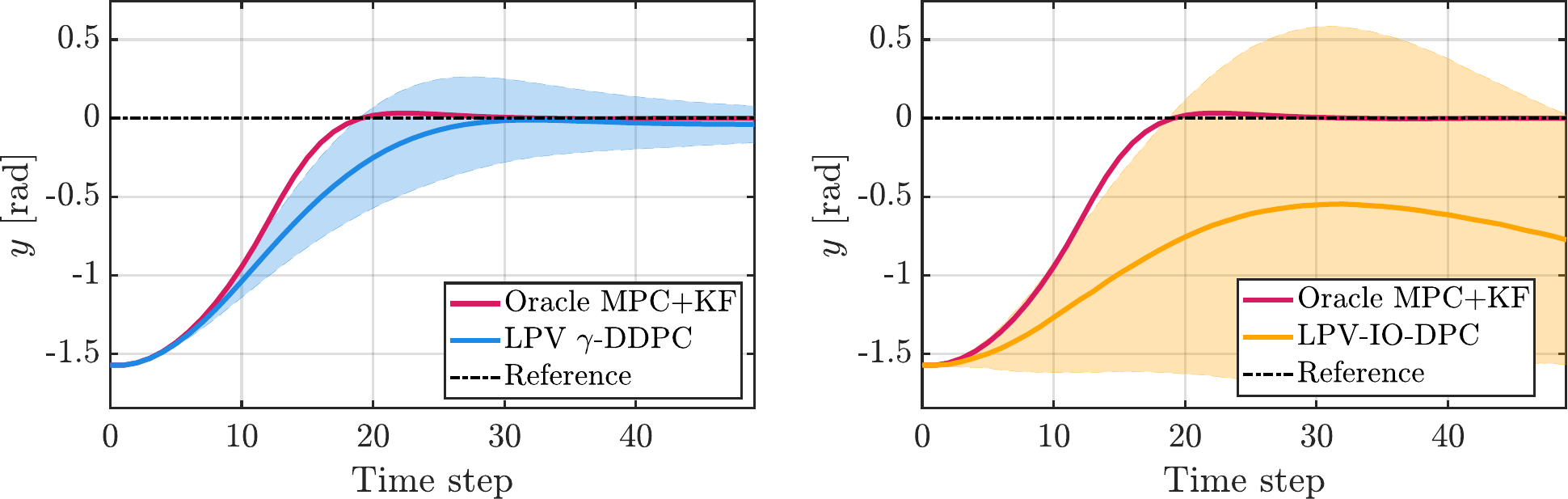}
        \caption{$N_\textrm{data}=89$, $M=2$}
        \label{subfig:large_noise_small_M}
    \end{subfigure}\\[2mm]% 
    \begin{subfigure}{\linewidth}
        \centering
        \includegraphics[width=\linewidth]{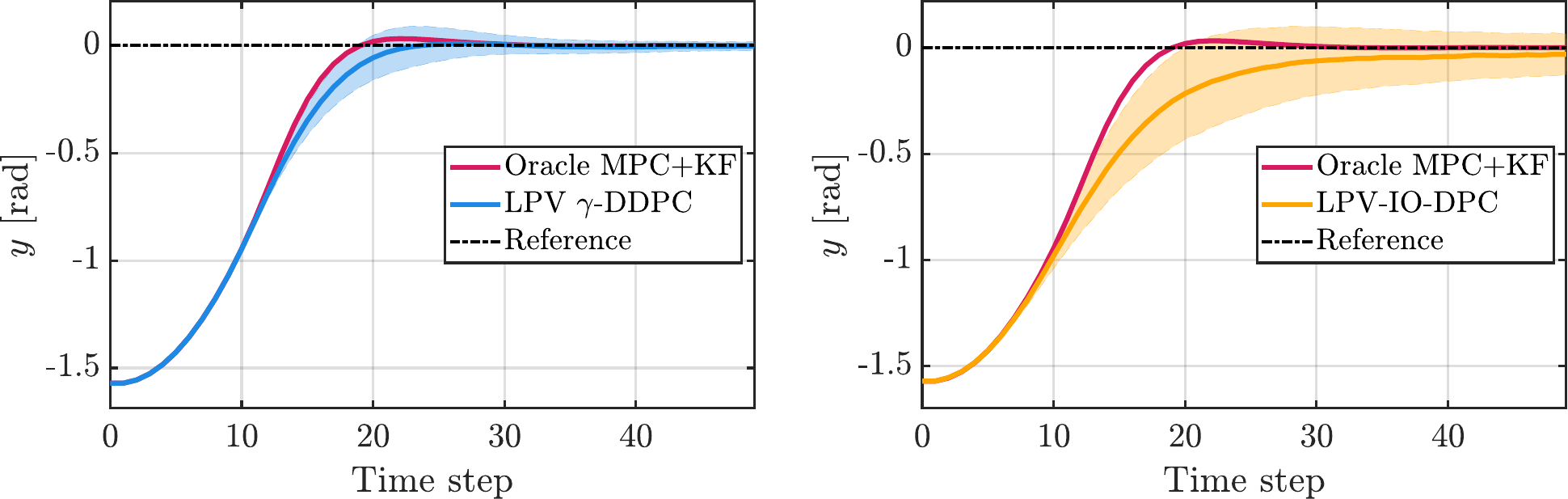}
        \caption{$N_\textrm{data}=120$, $M=4$}
        \label{subfig:large_noise_large_M}
    \end{subfigure}
    \caption{Unbalanced disk trajectories mean and standard deviation (shaded area) for a SNR of 46 [dB]. The LPV-IO-DPC unstable trajectories (21 out of 100) are not shown.}
    \label{fig:comparison_medium_noise}
\end{figure}

\begin{table}[!tb]
        \caption{Average simulation time [s] per Monte Carlo run \emph{vs} approach and dataset length.}
    \label{tab:simulation_time}
    \centering
    \begin{tabular}{|c|c|c|}
        \hline
        $\bm{N_\textrm{data}}$ & \textbf{LPV }$\bm{\gamma}$\textbf{-DDPC} & \textbf{LPV-IO-DPC} \\
        \hline 
         89  & 8.45 & 35.35 \\
         \hline
         120 & 8.52 & 73.97\\
         \hline
    \end{tabular}
\end{table}

\subsection{Robustness to inaccurate choice of the scheduling map}
While previously we have assumed to know a priori the correct scheduling map, this hypothesis implies some prior knowledge on the system dynamics, which is often unavailable in a data-driven setting. We thus continue our assessment of LPV $\gamma$-DDPC performance, analyzing its performance when using the wrong mapping $\phi(\theta) = \cos(\theta)$. To this end, we collect $N_{\textrm{data}} = 2000$ samples with a control input $u_k \sim \mathcal{U}(\mathbb{U})$, setting $M = 15$, $T = 10$, $Q = 16$, $R = 0.01$, $h_Z = h_U = 3$, $n_{Z_P} = 50$, and $n_{U_F} = 15$. By performing $30$ Monte Carlo simulations, we first set $\sigma_{e} = 0.01$, yielding a SNR of 46~[dB]. In this case, we set $\beta_2 = 0.01$. We then increase the level of noise up to 23~[dB] by setting $\sigma_e = 0.1$, for which we take $\beta_2 = 1$. 

The attained tracking results are benchmarked against those obtained with the right scheduling map in \figurename{~\ref{fig:wrong_prior}}. In both noise scenarios, the wrong scheduling assumption generates some offset in reference tracking. Note that this offset could still be removed by introducing an integral action in the controller, as in \cite{Lazar:22a}, which we leave as future extension. Besides the tracking offset, even with the wrong prior on the scheduling map, LPV $\gamma$-DDPC allows the system output to closely follow the changes in the reference to be tracked. 

In addition, \figurename{~\ref{subfig:wrong_large_noise}} further highlights that LPV $\gamma$-DDPC can control the system, with an accuracy linked to the knowledge of the true scheduling map, even with a relatively small SNR. Note that this is not achievable with LPV-IO-DPC and other data-driven predictive control architectures for LPV systems that, in the same noisy scenario, do not enable the system to complete the tracking task.

\begin{figure}[!tb]
    \centering
    \begin{subfigure}{\linewidth}
        \centering
        \includegraphics[width=\linewidth]{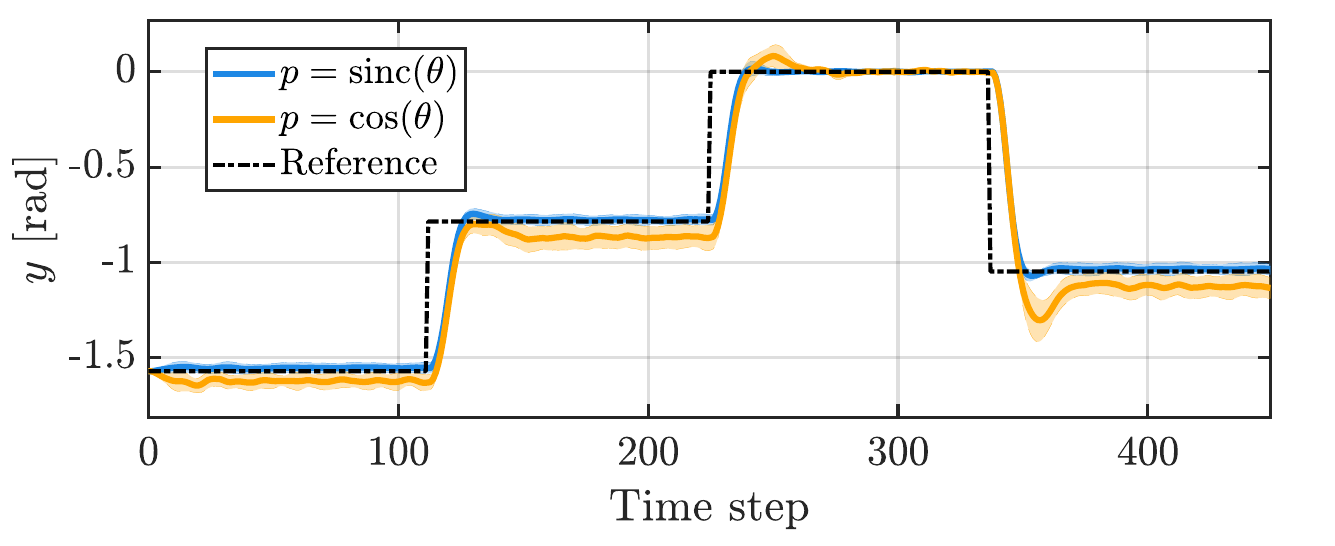}
        \caption{SNR of 46 [dB].}
        \label{subfig:wrong_small_noise}
    \end{subfigure}\\[2mm]% 
    \begin{subfigure}{\linewidth}
        \centering
        \includegraphics[width=\linewidth]{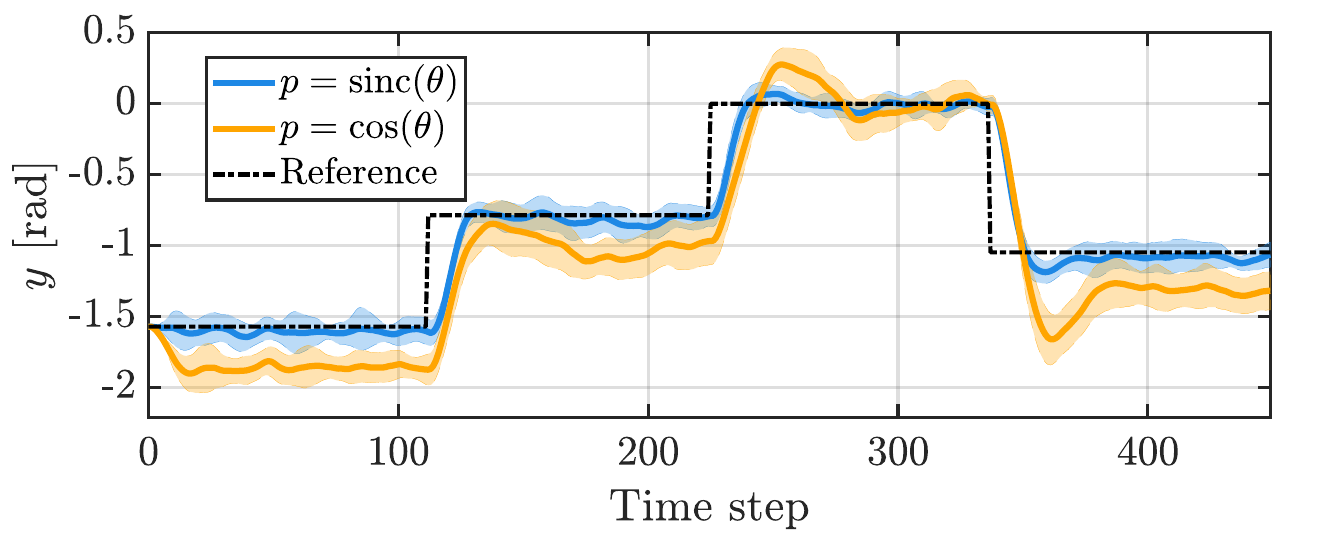}
        \caption{SNR of 23 [dB].}
        \label{subfig:wrong_large_noise}
    \end{subfigure}
    \caption{Mean (solid line) and standard deviation (shaded area) of the closed-loop output obtained by using the true map $\phi(\theta) = \textrm{sinc}(\theta)$ and the wrong map $\phi(\theta) = \cos(\theta)$.}
    \label{fig:wrong_prior}
\end{figure}

\section{Conclusions}\label{sec:conclusions}
In this paper, we propose a reduced-complexity LPV $\gamma$-DDPC, a computationally efficient, subspace-inspired, data-driven predictive control strategy. To achieve tractability of the controller, we proposed constructing a reduced-order data-driven predictor that trades off accuracy for computational complexity. This reduced-complexity predictor is one of the main contributions of this work, making the proposed data-driven control approach computationally more efficient than existing LPV data-driven control techniques. In our numerical examples, we  show that our approach is more resilient to measurement noise and less sensitive to the selection of control hyperparameters than other existing techniques for LPV data-driven control. The complexity reduction, in turn, enables the use of longer past horizons in the predictor, which are often desirable in predictive control~\cite{Berberich:20a,Breschi:23c}.

Future work will be devoted to the automatic selection of control hyperparameters and to extensions to robustify the current control structure, e.g., by using (nonlinear) velocity forms for robust tracking, and terminal constraints for recursive stability.

% Acknowledgements
\begin{ack}
We thank the authors of \cite{BouHamdan:24a} for sharing their code implementation.
\end{ack}

% Bibliography
\bibliographystyle{ieeetr} % <- nicer imo (always initials use etc.)
\bibliography{autosam.bib}

\end{document}